\newtheorem{theorem}{Theorem}[section]
\newtheorem{corollary}[theorem]{Corollary}
\newtheorem{proposition}[theorem]{Proposition}
\newtheorem{lemma}[theorem]{Lemma}
\newtheorem{remark}{Remark}
\def\D{\Delta}
\def\RR{\mathbb{R}}
\def\CC{\mathbb{C}}
\def\PP{\mathbb{P}}
\def\EE{\mathbb{E}}
\def\cB{\mathcal{B}}
\def\cD{\mathcal{D}}
\def\cI{\mathcal{I}}
\def\cO{\mathcal{O}}
\def\cR{\mathcal{R}}
\def\cS{\mathcal{S}}
\def\cV{\mathcal{V}}
\def\fF{\mathfrak{F}}
\def\fI{\mathfrak{I}}
\def\fM{\mathfrak{M}}
\def\Var{\textup{\mbox{Cov}}}
\def\Fr{\textup{\mbox{Fr}}}
\def\op{\textup{\mbox{op}}}
\def\tr{\textup{\mbox{trace}}}
\def\det{\textup{\mbox{det}}}
\newcommand{\abs}[1]{\left\lvert #1 \right\rvert}
\newcommand{\norm}[1]{\left\lVert#1\right\rVert}
\begin{document}
%
\title{On Lipschitz Bounds of General Convolutional Neural Networks}
%
%
%

\author{Dongmian Zou,~\IEEEmembership{} \thanks{D. Zou (Email: dzou@ima.umn.edu) is with the Institute for Mathematics and its Applications, University of Minnesota; he was with Department of Mathematics, University of Maryland when writing the first draft of the paper. R. Balan (Email: rvbalan@cscamm.umd.edu) is with Department of Mathematics, University of Maryland. M. Singh (Email: Maneesh.Singh@verisk.com) is with Verisk Analytics.}
        Radu Balan,~\IEEEmembership{}
        Maneesh Singh~\IEEEmembership{}}
\maketitle

\begin{abstract}
Many convolutional neural networks (CNN's) have a feed-forward structure. In this paper, a linear program that estimates the Lipschitz bound of such CNN's is proposed. Several CNN's, including the scattering networks, the AlexNet and the GoogleNet, are studied numerically and compared to the theoretical bounds. Next, concentration inequalities of the output distribution to a stationary random input signal expressed in terms of the Lipschitz bound are established. The Lipschitz bound is further used to establish a nonlinear discriminant analysis designed to measure the separation between features of different classes. 
\end{abstract}

\begin{IEEEkeywords}
Lipschitz bounds, convolutional neural networks, scattering networks, linear programming, adversarial perturbation
\end{IEEEkeywords}

%
\IEEEpeerreviewmaketitle

\section{Introduction}\label{sec:intro}

\IEEEPARstart{C}{onvolutional} neural networks (CNN's) have proved to be an effective tool in various image processing tasks. The convolutional layers at different levels are capable of extracting different details from images. As a feature extractor, a CNN is stable to small variations from the input and therefore performs well in a variety of classification, detection and segmentation problems.

The scattering transform \cite{Mallat12,BM13} is a special type of CNN that can be represented with a multilayer structure (thus also called a scattering network). Although the filters are designed wavelets rather than learned, the scattering transform proves to be an effective feature extractor.  In the mathematical analysis of scattering network, it is proved \cite[Theorem 2.10]{Mallat12} that the scattering transform is invariant to translation. However, this is true only if we take the full representation where the limiting scale $J \rightarrow \infty$. In practice, we take a finite $J$ and therefore only have stability with respect to translation. The mathematical analysis for the stability properties of scattering networks is not limited to wavelets: for instance, it is generalized by  using semi-discrete frames as filters in \cite{WB15,WB16}, and time-frequency atoms as filters in \cite{CzaL17}. In all these cases, the scattering transforms are Lipschitz continuous with Lipschitz constant $L=1$, which is an important factor for the provable stability properties.

A scattering network extracts features from every convolutional layer. This is not the case for a general CNN. In \cite{Goodfellow16} a CNN is defined as a neural network which has at least one convolution unit. Many widely-adapted CNN models have either a sequential structure (e.g. the AlexNet \cite{Krizhevsky12}) or a more complex feed-forward structure (e.g. the GoogleNet \cite{Szegedy15}). For those models, stability is still an important issue. Intuitively, keeping the same energy in the feature, we should train the network so that the features are as stable as possible to small perturbations before using dense layers to do the classification. In \cite{SZSBEGF13}, the authors use the large Lipschitz bound of each single layer to illustrate that the AlexNet could be very unstable with respect to small perturbation on the input image. In fact, changing a small number of pixels could ``fool'' the network so that it produces wrong classification results. In general, a small Lipschitz bound of the entire transform implies the robustness of a CNN to small perturbations.

``Fooling'' networks is naturally connected to adversarial networks. Indeed, Lipschitz bounds are already used in training adversarial networks other than just quantitatively showing the robustness. In \cite{ArjCB17}, the authors propose an objective function for training generative adversarial networks where they use (the distance between) the Lipschitz constant (and $1$) as a penalty term. However, there is no direct way to impose it. Later in \cite{GulAA17}, the authors use a gradient penalty inspired by the fact that a function is $1$-Lipschitz if its gradient is bounded by $1$.

Although it plays an important role in deep learning, the study of Lipschitz bounds is not completely addressed by existing literature. The frameworks in \cite{Mallat12}--\cite{CzaL17} analyze the $1$-Lipschitz transformations but are limited to the scattering transforms and do not generalize automatically to general CNN's. \cite{SZSBEGF13} provides a Lipschitz bound using the product of Bessel bounds of each layer, but in general lacks tightness for non-sequential models such as the scattering network. Our paper fills in the gap between these approaches, by providing a unified stability analysis that applies to both the scattering networks (as in \cite{Mallat12}--\cite{CzaL17}) and to the more general convolutional networks. Our framework is flexible and compatible with architectures that may or may not generate outputs from hidden layers. The results presented in this paper are optimal for scattering networks and in general tighter than taking the product of Bessel bounds in each layer. Our focus is on estimation of these Lipschitz bounds, and how they relate to stochastic processes. We discuss how the Lipschitz bounds can be used for classification, but we do not focus on extending these results to generative adversarial networks. Instead we study numerically a few examples, including the AlexNet and the GoogleNet. Surprisingly, when applied to the AlexNet (and GoogleNet), we discovered that while the estimated bounds are about three orders of magnitude more conservative than the numerically estimated Lipschitz bounds, the empirical bounds are still three orders of magnitude smaller. Specifically, the largest local Lipschitz bound is obtained numerically to be of order 1, whereas on an extensive study using ImageNet \cite{ILSVRC15} images, the ratio between the energy of output variation to the energy of input variation is of the order $10^{-3}$. 

We first overview the CNN architecture considered in this paper (the details are given in the main test). The framework is applicable to the scattering network \cite{Mallat12,BM13}, the AlexNet \cite{Krizhevsky12} and the GoogleNet \cite{Szegedy15}. It can also be used to analyze models such as Long-Short Term Memory \cite{HS97}. We state the theory for continuous signals, but explain how to adapt it for the discrete case (which is the case for AlexNet and GoogleNet). We focus on the feature extraction part of the network and do not discuss the fully connected layers that are usually put on top of the structure, though the fully connected layers can be regarded as a special case of convolutional layers. The CNN that we consider has a feed-forward structure and consists of different layers (it is possible to use infinitely many layers to represent a feedback structure). We define the layers according to the convolutions. Specifically, each layer consists of input nodes, convolutional filters, detection / merge operations, pooling filters, output (feature) nodes and (hidden) output nodes.

\begin{itemize}
\item
The \emph{input nodes} are signals passed to the current layer. That could come from the hidden output nodes in the previous layer, or the input signal to the network.
\item
The \emph{convolutional filters} are the filters that perform convolution with the signal from the input nodes. Suppose $y$ is the signal in an input node, and $g$ is the convolutional filter, the output is
\[ z(t) = y \ast g(t) = \int y(t-s)g(s)ds = \int y(s)g(t-s)ds~. \]
\item
The \emph{pooling filters} are low-pass convolutional filters that lower the complexity before the feature is extracted as output. Note that these are still linear translation-invariant operations which are commonly used in scattering networks. The nonlinear operations such as max pooling and average pooling are contained in the detection operations. 
\item
The \emph{(feature) output nodes} are outputs of the convolutional neural network. As we specified earlier, these nodes form a subset of the representation. Once the representation is extracted, the specific machine learning tasks, such as classification and prediction, will be performed on  the representation. 
\item
The \emph{dilation} operations are ``changes of scale'' on the space variables. A dilation operation on a signal $f(x)$, $x \in \RR^d$, can be represented using a $d \times d$ invertible matrix $D$. The dilated signal is $f(Dx)$.
\item
The \emph{detection} operations are nonlinear operations that apply pointwise to the output of the convolutional filters. The nonlinearities have Lipschitz constant $1$ (e.g. ReLU functions). In addition to applying the nonlinearity, the outputs can be aggregated by \emph{merge} operations to produce a single output for dimensionality reduction. The max pooling and average pooling are modeled in this manner. 
\item
The \emph{(hidden) output nodes} are signals that propagate to the next layer. The signals at the output nodes are identical to those at the input nodes of the next layer.
\end{itemize}

\begin{figure}[!ht]
\centering
\includegraphics[width=0.75\linewidth]{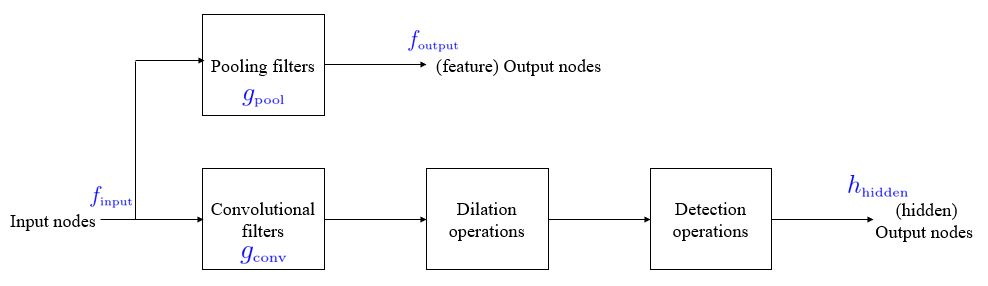}
\caption{The structure of a network layer. The network we consider consists of a number of layers, which makes the structure ``deep''.}
\label{fig:layer_structure}
\end{figure}

In this paper, unless otherwise specified, we use $f$ to denote the input and output signals of a CNN, $h$ to denote the hidden features, and $g$ to denote filters. The input signal on the $d$-dimensional Euclidean space has finite energy, that is, $f \in L^2(\RR^d)$. The Fourier transform of $f$, denoted by $\hat{f}$, is defined formally to be \[\hat{f}(\omega) = \int_{\RR^d} f(x) e^{-2 \pi i \omega \cdot x} dx ~, \quad \omega \in \RR^d ~.\] and we refer the readers to \cite{Ben96} for rigorous definitions for $f$ when $f \in L^2(\RR^d)$ or when $f$ is a generalized function.  The filters of CNN are taken from the Banach Algebra of tempered distributions with an essentially bounded Fourier Transform, that is,
\begin{equation}
\label{def:balg}
\cB = \left\{ g \in \cS'(\RR ^d), \norm{\hat{g}}_{\infty} < \infty \right\} ~.
\end{equation}
We have a detailed discussion of this algebra in Appendix \ref{appendix:banach}. We use $\norm{\cdot}_p$ to denote the $L^p$-norm corresponding to the Lebesgue integral. For a matrix $A$, $A^t$ denotes its transpose, and $A^*$ denotes its conjugate transpose. We use $\norm{A}_{\op} = \max_{\norm{x}_2=1} \norm{Ax}_2$ to denote the operator norm of $A$, $\norm{A}_{\ast} = \tr(\sqrt{A^* A})$ to denote its nuclear norm, and $\norm{A}_{\Fr} = \sqrt{\tr(A^* A)}$ to denote its Frobenius norm.

The paper is organized as follows. Section \ref{sec:definecnn} sets up the mathematical problem  by defining the layers of a CNN. Section \ref{sec:lipschitz} states the results on estimating the Lipschitz bounds. Section \ref{sec:examples}  illustrates examples from the scattering network to the AlexNet and the GoogleNet. Section \ref{sec:stationary} discusses how the Lipschitz bounds relate to concentration results for stationary processes on CNN's. 
Section \ref{sec:classification} discusses using the Lipschitz bounds to construct a nonlinear discriminant.

\section{Defining a CNN}\label{sec:definecnn}

The overall structure of an $M$-layer CNN is illustrated in Figure \ref{fig:layerdetail}. The picture shows how an input propagates through the layers while generating outputs at each layer. The details of the layers are described in the following two subsections. If no merge operation is present at a certain layer, the convolutional layer is modeled as a linear operation followed by nonlinearity; if there are merge operations, different types of merge operations are modeled separately. 

\begin{figure*}[!ht]
\centering
\includegraphics[width=0.75\linewidth]{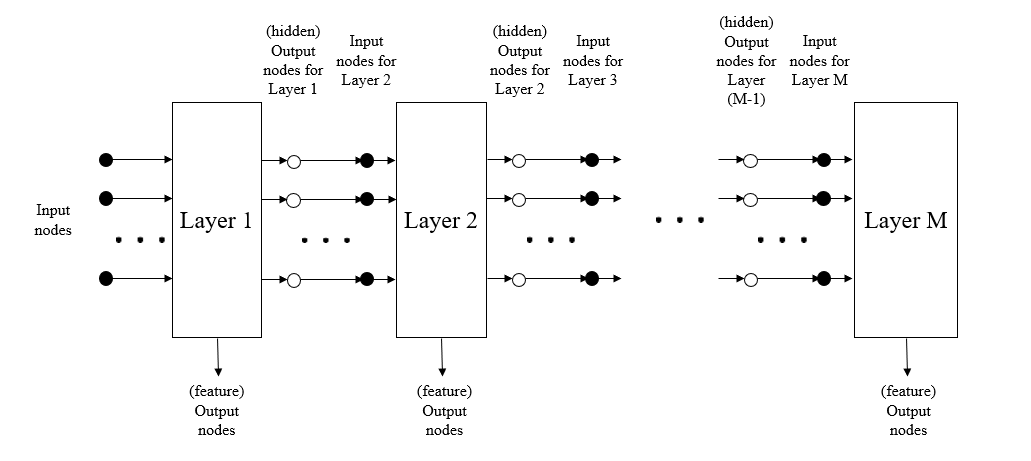}
\caption{The detail of an $M$-layer CNN. The signals at output nodes are identical as at input nodes in the next layer. There may or may not be output generation in each layer.}
\label{fig:layerdetail}
\end{figure*}

\subsection{A layer without merge operations}
If a certain layer does not contain any merging, we can model the filters as a linear transform from signals on all the input nodes. In the $m$-th layer, the set of input nodes is denoted by $\cI_m = \{N_{m,1}$, $N_{m,2}$, $\cdots$, $N_{m,n_m}\}$ and the set of output nodes by $\cO_m = \{N'_{m,1}$, $N'_{m,2}$, $\cdots$, $N'_{m,n'_m}\}$. Further, the set of output generating nodes is denoted by $\cV_m = \{V_{m,1}$, $V_{m,2}$, $\cdots$, $V_{m,n_m}\}$. With this notation, let $h_{m,1}$, $h_{m,2}$, $\cdots$, $h_{m,n_m}$ be the signals on the input nodes, a linear operator $T^{(m)}$ is a $n'_m$-by-$n_m$ array of filters $T^{(m)}_{n',n}$ in $\cB$ such that \[h^{\spadesuit}_{m,n'} = \sum_{n=1}^{n_m} T^{(m)}_{n',n} \ast h_{m,n}~, \quad 1 \leq n' \leq n'_m,\] is received before downsampled by the $d$-by-$d$ invertible matrix $D_{m,n'}$ and sent into a nonlinearity $\sigma_{m,n'}$ to output \[h'_{m,n'} (x) = \sigma_{m,n'} \left(  h^{\spadesuit}_{m,n'} (D_{m,n'} x) \right) ~.\]

\begin{figure}[!ht]
\centering
\includegraphics[width=0.75\linewidth]{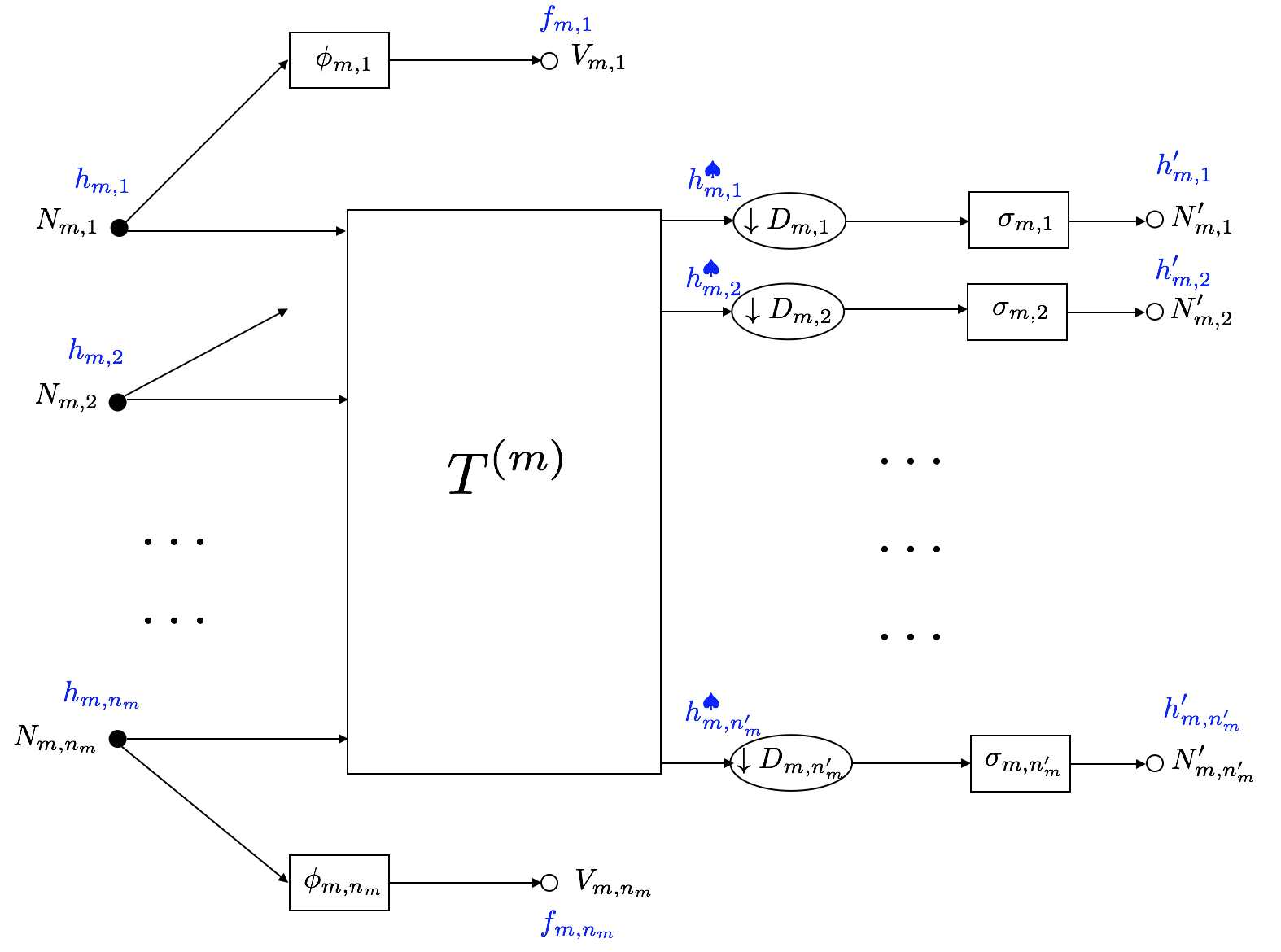}
\caption{The detail of the m-th layer with no merge operations. $N_{m,n}$ denote the input nodes, $N'_{m,n'}$ denote the hidden output nodes, $V_{m,n}$ denote the feature output nodes. $\phi_{m,n}$ denote the pooling filters, $D_{m,n'}$ denote the dilation factors, and $\sigma_{m,n'}$ denote the 1-Lipschitz nonlinearities. The notations in blue represent the signals at each node. $h_{m,n}$ denote the input signals of the layer. $h'_{m,n'}$ denote the hidden output signals that are passed to the next layers. $h^{\spadesuit}_{m,n'}$ denote the signals received after passing the linear operator $T^{(m)}$. $f_{m,n}$ denote the signals at the feature output nodes.}
\label{fig:nomerge}
\end{figure}

For the $m$-th layer, we define three types of Bessel bounds as follows.
For each $\omega \in \RR^d$, denote $\hat{T}^{(m)}(\omega)$ to be the $n'_m \times n_m$ matrix that contains the Fourier transform $\hat{T}^{(m)}_{n',n}$ of $T^{(m)}_{n',n}$ at $\omega$, for $1 \leq n \leq n_m$, $1 \leq n' \leq n'_m$. Also for each $\omega$, denote $\hat{\Psi}^{(m)}(\omega)$ to be the $n_m \times n_m$ diagonal matrix that has $\hat{\phi}_{m,n}(\omega)$, the Fourier transform of the convolutional filter at $\omega$, as its $(n,n)$ entry. Let $\D^{(m)}$ be the $n'_m \times n'_m$ diagonal matrix with $(\det D_{m,n'})^{-1/2}$ as its $(n',n')$ entry. The 1st type Bessel bound for the $m$-th layer is defined to be
\begin{equation}\label{def:b1mnm}
B_m^{(1)} = \sup_{\omega \in \RR^d} \norm{\begin{bmatrix}
 \D^{(m)} \hat{T}^{(m)}(\omega)\\
 \hat{\Psi}^{(m)}(\omega)
\end{bmatrix}}_{\op}^2 ~,
\end{equation}
the 2nd type Bessel bound for the $m$-th layer is defined to be
\begin{equation}\label{def:b2mnm}
B_m^{(2)} = \sup_{\omega \in \RR^d} \norm{
 \D^{(m)} \hat{T}^{(m)} (\omega)
 }_{\op}^2 ~,
\end{equation}
and the 3rd type Bessel bound is defined to be 
\begin{equation}\label{def:b3mnm}
B_m^{(3)} = \sup_{\omega \in \RR^d} \norm{\hat{\Psi}^{(m)}(\omega)}_{\op}^2 ~.
\end{equation}

In general, the Bessel bound quantifies how the energy is magnified by  convolution. The bound is finite if the filters form semi-discrete frames (see \cite[Appendix A]{WB16}). Our definition acts in the spectral domain and it naturally yields estimates of the the Lipschitz bounds: see Appendix A,  (\ref{eq:calBessel}). The need for three types of Bessel bounds is related to different types of energy mixing: input-to-combined hidden and feature output nodes, input-to-hidden output nodes, and input-to-feature output nodes.
Intuitively, $B_M^{(1)}$ is the Bessel bound for the frame composed of both $T_{n', n}^{(m)}$ and $\phi_{m,n}$, $B_M^{(2)}$ is for the frame of $T_{n', n}^{(m)}$ and $B_M^{(3)}$ is for the frame of $\phi_{m,n}$ only. For a layer with merge operations, the Bessel bounds share the same intuition, but their estimates have different mathematical representations. We describe that in the next section. 

\subsection{A layer with merge operations}
There are three types of merging. Type I takes inputs $y_1, \cdots, y_k$ from $k$ channels, applies a nonlinearity function $\sigma_1, \cdots, \sigma_k$ respectively, and then sums them up. That is, the output is
\begin{equation}
z = \sum_{j=1}^k \sigma_j (y_j) ~.
\end{equation}
Type II takes inputs $y_1, \cdots, y_k$ from $k$ channels, apply a nonlinearity on each signal, and then aggregates them by a pointwise $p$-norm. That is, the output is
\begin{equation}
z = \left( \sum_{j=1}^k \abs{ \sigma_j (y_j) }^p \right) ^ {1/p} , ~ \textup{if } p < \infty ~;
\end{equation}
and
\begin{equation}
z = \max_{j=1,\cdots,k} \abs{ \sigma_j (y_j) }, ~ \textup{if } p = \infty ~.
\end{equation}
Type III takes inputs $y_1, \cdots, y_k$ from $k$ channels, apply a nonlinearity on each signal, and then performs a pointwise multiplication. The nonlinearity $\sigma_j$ should satisfy $\norm{\sigma_j}_{\infty} \leq 1$ for each $j$. The output is
\begin{equation}
z = \prod_{j=1}^k \sigma_j(y_j) ~.
\end{equation}

\begin{figure}[!ht]
\centering
\includegraphics[width=0.75\linewidth]{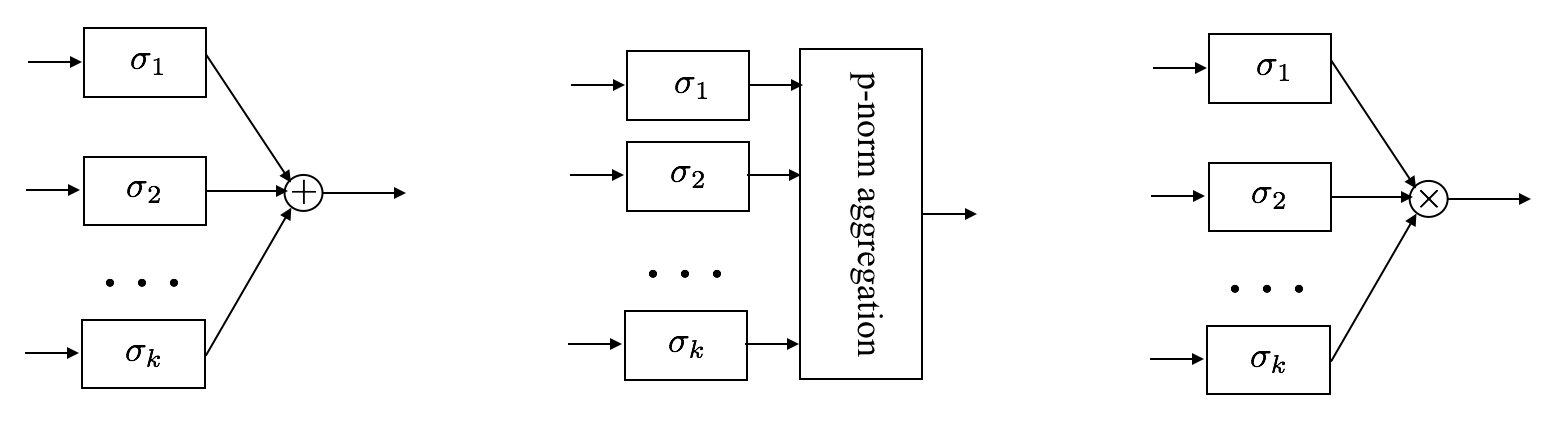}
\caption{The three types of merge. Left: Type I - taking sum of the inputs; middle: Type II - taking $p$-norm aggregation of the inputs; right: Type III - taking pointwise product of the inputs.}
\label{fig:layer}
\end{figure}

We point out that the standard pooling operations in most discrete CNN's can be modeled in the continuous case by these merge operations. Specifically, \emph{max pooling} is the operation of taking the maximal element among those in the same sub-regions. We can use translations and dilations to separate elements in a sub-region to distinct channels, as illustrated in Figure \ref{fig:maxpool}. Then the $L^{\infty}$-aggregation select the largest element and performs the max pooling. \emph{Average pooling} replaces ``taking the max'' by ``taking the average''. Similarly to max pooling, it can be done by taking the sum as illustrated in Figure \ref{fig:avgpool}. A concrete example illustrates max pooling as implemented by this framework. Similar implementation can realize average pooling. Consider the finite signal $(1,3,4,2,1,5,6,7)$ in Figure \ref{fig:exMaxPooling} for which we want to apply max pooling with size = 2 and stride = 2. Then the max pooled signal is $(3,4,5,7)$, where each entry is the larger value within each pair. Consider now the (circular) translation by 1 pixel of the first signal, that is $(3,4,2,1,5,6,7,1)$ together with the original signal (the middle two signals in the figure). Apply the dilation operator where we discard the second pixel in each consecutive pair of pixels. Thus we obtain $(1,4,1,6)$ and $(3,2,5,7)$ respectively. Now a Type II aggregation with $p = \infty$ selects the larger value between two pixels at the same position, and therefore results in $(3,4,5,7)$, which is the same as the max pooling operation applied on the original signal.


\begin{figure}[ht!]
\begin{subfigure}{.5\textwidth}
  \centering
  \includegraphics[width=\linewidth]{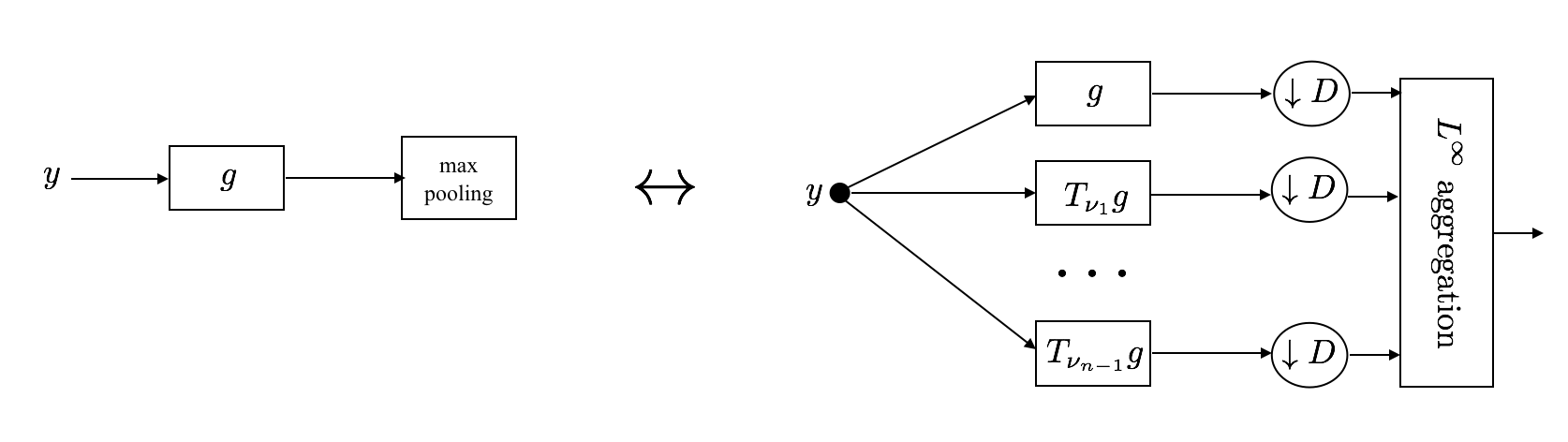}
\caption{max pooling}
\label{fig:maxpool}
\end{subfigure}
\begin{subfigure}{.5\textwidth}
  \centering
  \includegraphics[width=\linewidth]{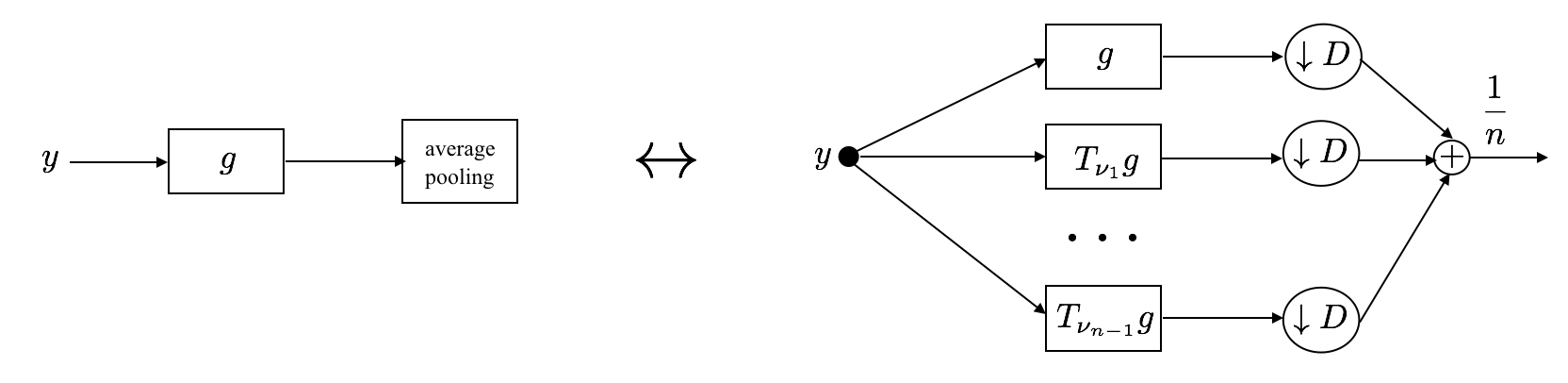}
\caption{ average pooling }
\label{fig:avgpool}
\end{subfigure}
\caption{In the continuous case, the max pooling is modeled as Type II aggregation for $p = \infty$, and the average pooling is modeled as Type I aggregation.}
\end{figure}

\begin{figure}[!ht]
\centering
\includegraphics[width=0.75\linewidth]{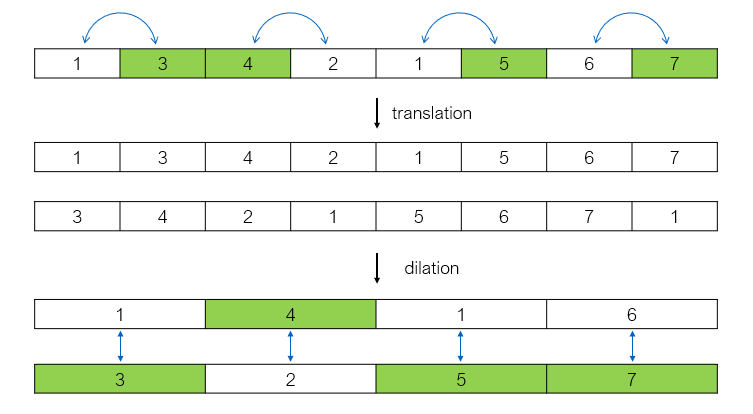}
\caption{A concrete example for the max pooling.}
\label{fig:exMaxPooling}
\end{figure}

Suppose there are $n_m$ nodes in the $m$-th layer (this works for $m < M$ but $m=M$ is a similar case in which there is no hidden output node). The set of these input nodes is denoted by $\cI_m = \{ N_{m,1}, N_{m,2}, \cdots, N_{m,n_m} \}$. Within the layer, each node is connected to several filters. The filter can be either a pooling filter, or a convolutional filter. Associated with $N_{m,n}$ for $1 \leq k \leq n_m$, the pooling filter is denoted to be $\phi_{m,n}$, and the convolutional filters to be $G_{m,n} = \{ g_{m,n;1}, \cdots g_{m,n;k_{m,n}} \}$. The set of filters in the $m$-th layer is thus
\begin{equation}
G_m = \cup_{n=1}^{n_m} G_{m,n} ~.
\end{equation}
Each filter $g_{m,n;k_{m,n}}$ is naturally classified into one of three categories according to the three types of merging: if a filter is merged using Type I operation, then it is classified as a Type I filter; in the same manner we define Type II and Type III filters. If a filter is not merged with other filters, we classify it as Type I (with $k=1$ in the first picture in Figure (\ref{fig:layer})). We denote the sets of all Type-I, II, III filters by $\tau_1, \tau_2, \tau_3$, respectively.

Note that each filter is associated with one and only one output node. Let $\cO_m = \{ N'_{m,1}, N'_{m,2}, \cdots, N'_{m,n'_m} \}$ denote the set of output nodes of the $m$-th layer. Note that $n'_m = n_{m+1}$ and there is a one-one correspondence between $\cO_m$ and $\cI_{m+1}$. The output nodes automatically divide $G_m$ into $n'_m$ disjoint subsets $G_m = \cup_{n'=1}^{n'_m} G'_{m,n'}$, where $G'_{m,n'}$ is the set of filters merged into $N'_{m,n'}$. Further, $\cV_m = \{V_{m,1}, V_{m,2}, \cdots, V_{m,n_m}\}$ denote the set of output generating nodes. The detail of one layer is illustrated in Figure \ref{fig:onelayerdetail}.

\begin{figure*}[!ht]
\centering
\includegraphics[width=0.75\linewidth]{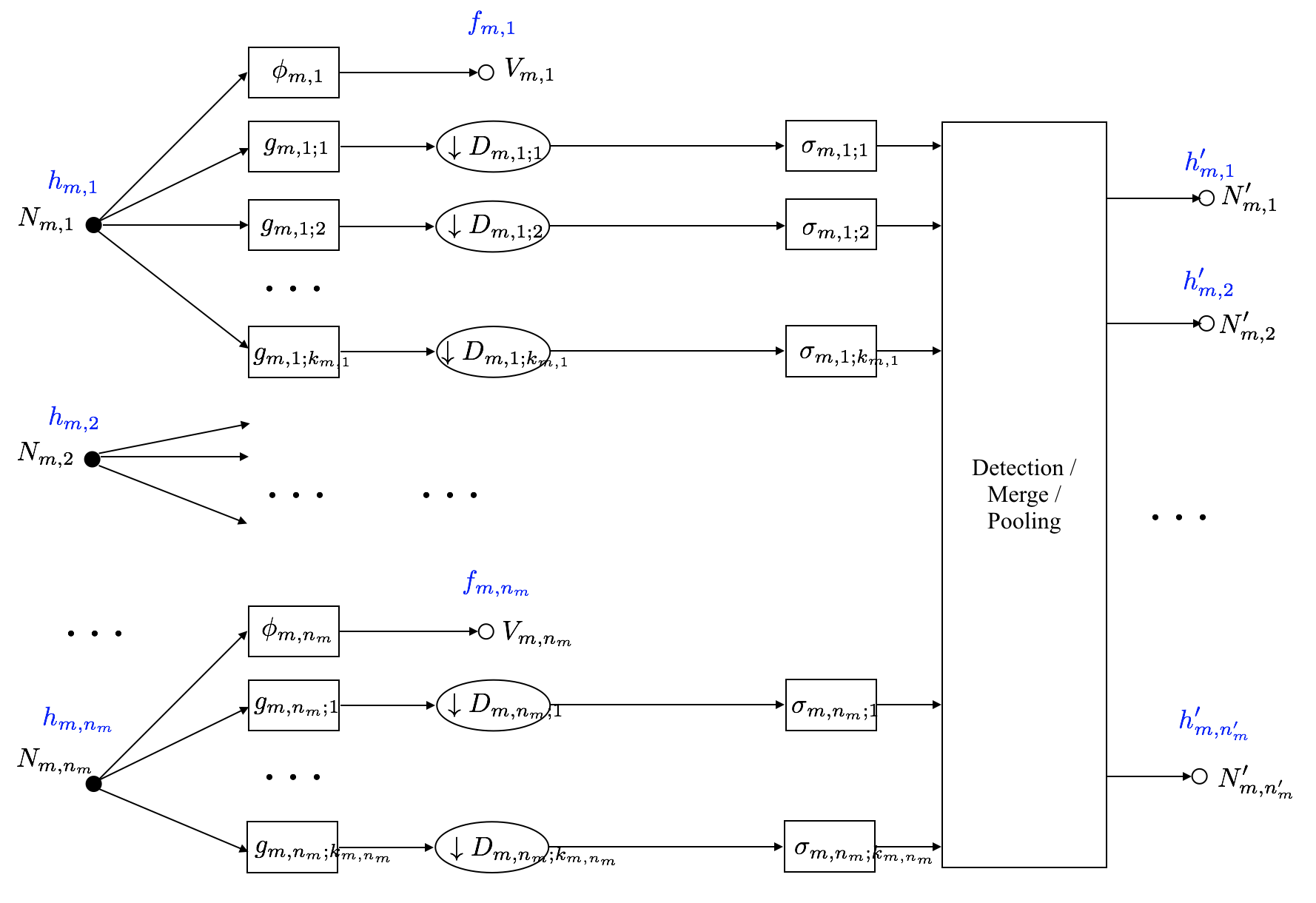}
\caption{The detail of one layer with merging. $N_{m,n}$ denote the input nodes, $N'_{m,n'}$ denote the output nodes, $V_{m,n}$ denote the output generating nodes. $\phi_{m,n}$ and $g_{m,n}$ denote the filters, $D_{m,n;k}$ denote the dilation factors. $\sigma_{m,n;k}$ denote the 1-Lipschitz nonlinearities (for illustration we put them outside the merge box, but they belong to the merge operations where we defined the three types of merge). The notations in blue represent the signals at the nodes. $h_{m,n}$ denote the input signals of the layer. $h'_{m,n'}$ denote the output signals that are passed to the next layers. $f_{m,n}$ denote the signals at the feature output nodes.}
\label{fig:onelayerdetail}
\end{figure*}

For each filter $g_{m,n;k}$, we define the associated multiplier $l_{m,n;k}$ in the following way: suppose $g_{m,n;k} \in G'_{m,n'}$, let $K = \abs{G'_{m,n'}}$ denote the cardinality of $G'_{m,n'}$. Then
\begin{equation}
l_{m,n;k} =
\begin{cases}
K & \text{, if } g_{m,n;k} \in \tau_1 \cup \tau_3\\
K^{\max \{ 0, 2/p-1 \}} & \text{, if } g_{m,n;k} \in \tau_2
\end{cases}
\end{equation}

We define the 1st type Bessel bound for the node $N_{m,n}$ to be
\begin{equation}
\label{def:b1mn}
B^{(1)}_{m,n} = \norm{ \abs{\hat{\phi}_{m,n}}^2 + \sum_{k=1}^{k_{m,n}} l_{m,n;k} D_{m,n;k}^{-d} \abs{\hat{g}_{m,n;k}}^2 }_{\infty} ~,
\end{equation}
the 2nd type Bessel bound to be
\begin{equation}
\label{def:b2mn}
B^{(2)}_{m,n} = \norm{ \sum_{k=1}^{k_{m,n}} l_{m,n;k} D_{m,n;k}^{-d} \abs{\hat{g}_{m,n;k}}^2 }_{\infty} ~,
\end{equation}
and the 3rd type Bessel bound to be
\begin{equation}
\label{def:b3mn}
B^{(3)}_{m,n} = \norm{ \hat{\phi}_{m,n} }_{\infty}^2 ~.
\end{equation}
Further, we define the 1st type Bessel bound for the $m$-th layer to be
\begin{equation}
\label{def:b1m}
B^{(1)}_{m} = \max_{1 \leq n \leq n_m} B^{(1)}_{m,n} ~,
\end{equation}
the 2nd type Bessel bound to be
\begin{equation}
\label{def:b2m}
B^{(2)}_{m} = \max_{1 \leq n \leq n_m} B^{(2)}_{m,n} ~,
\end{equation}
and the 3rd type Bessel bound to be
\begin{equation}
\label{def:b3m}
B^{(3)}_{m} = \max_{1 \leq n \leq n_m} B^{(3)}_{m,n} ~.
\end{equation}

\section{Calculating the Lipschitz bound}\label{sec:lipschitz}
Suppose we are given with a CNN within the framework given in Section \ref{sec:definecnn}. For any input signal $f$ and $\tilde{f}$, let $f_N$ be the output for $f$ from the node $N$, and $\tilde{f}_N$ be the output for $\tilde{f}$ from the node $N$. Let $\cV = \cup_{m=1}^M \cV_m$ be the collection of all output generating nodes. We say $L$ is a \emph{Lipschitz bound} for the CNN if
\begin{equation}
\sum_{N \in \cV} \norm{f_N - \tilde{f}_N}_2^2 \leq L \norm{f - \tilde{f}}_2^2 ~.
\end{equation}

The map $\Phi: L^2(\RR^d) \rightarrow [L^2(\RR^d)]^{\abs{\cV}}$ induced by the CNN is defined by\begin{equation}\label{def:Phi}\Phi(f) = (f_N)_{N \in \cV} ~.\end{equation}A norm $|||\cdot|||$ defined on $[L^2(\RR^d)]^{\abs{\cV}}$ by \[\Big|\Big|\Big|(f_N)_{N \in V}\Big|\Big|\Big| = \left( \sum_{N \in \cV} \norm{f_N}_2^2 \right)^{1/2}\]is well defined and $L_c = \sqrt{L}$ is a \emph{Lipschitz constant} in the sense that\begin{equation}\Big|\Big|\Big| \Phi(f) - \Phi(\tilde{f}) \Big|\Big|\Big| \leq L_c \norm{f-\tilde{f}}_2 ~.\end{equation}
We have the following theorem for calculating the Lipschitz bound.

\begin{theorem}
\label{thm:lp}
Consider a CNN in the framework of Section II, with $M$ layers and in the $m$-th layer it has 1st type Bessel bound $B_m^{(1)}$, 2nd type Bessel bound $B_m^{(2)}$ and 3rd type Bessel bound $B_m^{(3)}$. Then the CNN induces a nonlinear map $\Phi$ that is Lipschitz continuous, and its Lipschitz bound is given by the optimal value of the following linear program:
\begin{equation}
\label{eq:lp}
\begin{aligned}
\max \quad & \sum_{m=1}^M z_m \\
\textup{s.t.} \quad & y_0 = 1 \\
& y_m + z_m \leq B^{(1)}_m y_{m-1}, \quad 1 \leq m \leq M-1 \\ 
& y_m \leq B^{(2)}_m y_{m-1}, \quad 1 \leq m \leq M-1 \\
& z_m \leq B^{(3)}_m y_{m-1}, \quad 1 \leq m \leq M \\
& y_m, z_m \geq 0, \quad \mbox{for all} ~ m ~.
\end{aligned}
\end{equation}
\end{theorem}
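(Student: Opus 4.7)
The plan is to interpret the LP variables as squared $L^2$-norms of layer-wise perturbations and verify that these norms are automatically feasible for the program~\eqref{eq:lp}. Normalize so that $\|f-\tilde f\|_2^2=1$ and, for each $m$, set
\[
y_m^\star=\sum_{n'=1}^{n'_m}\|h'_{m,n'}-\tilde h'_{m,n'}\|_2^2,
\qquad
z_m^\star=\sum_{n=1}^{n_m}\|f_{m,n}-\tilde f_{m,n}\|_2^2,
\]
so that $y_0^\star=1$ and $\sum_{N\in\cV}\|f_N-\tilde f_N\|_2^2=\sum_{m=1}^M z_m^\star$. The theorem reduces to showing the three per-layer inequalities $y_m^\star+z_m^\star\le B_m^{(1)}y_{m-1}^\star$, $y_m^\star\le B_m^{(2)}y_{m-1}^\star$, and $z_m^\star\le B_m^{(3)}y_{m-1}^\star$: these make $(y^\star,z^\star)$ a feasible point of~\eqref{eq:lp}, so $\sum_m z_m^\star$ is dominated by the LP optimum.

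For a layer without merge operations, each hidden output is the composition of the matrix-valued filter $T^{(m)}$, the dilation $D_{m,n'}$, and a $1$-Lipschitz nonlinearity $\sigma_{m,n'}$. Plancherel's identity, combined with the change of variables $x\mapsto D_{m,n'}x$ (which introduces the factor $(\det D_{m,n'})^{-1}$, absorbed into $\D^{(m)}$), expresses the squared hidden-output energy as $\int\|\D^{(m)}\hat T^{(m)}(\omega)\,\hat h_m(\omega)\|_2^2\,d\omega$, so the pointwise operator-norm bound integrated against $\|\hat h_m(\omega)\|_2^2$ gives $y_m^\star\le B_m^{(2)}\,y_{m-1}^\star$ once $1$-Lipschitzness is used to pass from signals to perturbations. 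The same argument applied to $\hat\Psi^{(m)}(\omega)$ yields $z_m^\star\le B_m^{(3)}y_{m-1}^\star$, and stacking the two branches into the block matrix of~\eqref{def:b1mnm} yields $y_m^\star+z_m^\star\le B_m^{(1)}y_{m-1}^\star$ directly, since the squared operator norm of the stacked matrix is $B_m^{(1)}$ by definition.

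For a layer with merge operations, the same Fourier-domain argument is applied node by node, but the contribution of each filter $g_{m,n;k}$ to a merged output must be estimated carefully, and this is where the multipliers $l_{m,n;k}$ enter. For Type~I, Cauchy--Schwarz on a sum of $K$ terms contributes the factor $K$; for Type~II, the pointwise norm equivalence between $\ell^p$ and $\ell^2$ on $K$ coordinates contributes $K^{\max\{0,\,2/p-1\}}$; for Type~III, the telescoping identity for a product of bounded factors, combined with $\|\sigma_j\|_\infty\le 1$ so that each factor is $1$-Lipschitz in its argument, again contributes $K$. Summing these per-filter contributions against the dilation factor $D_{m,n;k}^{-d}$ from the downsampling change of variables and the pooling term $|\hat\phi_{m,n}|^2$ reproduces exactly the node-wise Bessel bounds~\eqref{def:b1mn}--\eqref{def:b3mn}, and taking the maximum over $n$ yields the layer-wise bounds. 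The main obstacle is precisely this merge bookkeeping: verifying that the constants $l_{m,n;k}$ in the statement are sharp enough to accommodate all three merge types simultaneously, especially Type~III, which is genuinely nonlinear in each of its inputs and cannot be handled by a linear operator-norm bound; the telescoping identity together with the $\|\sigma_j\|_\infty\le 1$ hypothesis is what keeps this case in line with the Type~I estimate.
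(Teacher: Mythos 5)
Your proposal is correct and follows essentially the same route as the paper's proof: interpret the layer-wise perturbation energies as a feasible point of the linear program, establish the three per-layer inequalities via Plancherel and the stacked operator-norm bound in the no-merge case, and handle the three merge types with exactly the constants $K$, $K^{\max\{0,2/p-1\}}$, and $K$ (the last via the telescoping identity and $\norm{\sigma_j}_\infty \le 1$). Your framing of the final step --- that feasibility of the actual energies forces the true output variation to be dominated by the LP optimum --- is the same argument the paper makes, stated slightly more explicitly.
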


The proof of Theorem \ref{thm:lp} is given in Appendix \ref{appendix:prooflip}. We  remark here that the linear program presented as (\ref{eq:lp}) is feasible, since one obvious feasible point is $y_m = 0$ for $1 \leq m \leq M-1$ and $z_m = 0$ for $1 \leq m \leq M$. Moreover, the solution is bounded since all $z_m$'s are bounded by $B_m^{(3)} \prod_{m'=1}^{m-1} B_{m'}^{(2)}$ according to the third and fourth inequalities in (\ref{eq:lp}). In practice, either the simplex method or the interior method (see, for instance \cite[Chapter 13-14]{NocW99}) can be used to solve this linear program, and they run in polynomial time with respect to the number of layers. If we are in the discrete case, say for pixel images, then we need to compute the Bessel bounds, which relies on the Fast Fourier Transforms that grows as $O(N \log N)$ with the dimensionality of filters. Although the complexity is not high, a Lipschitz bound computed via a linear program is still not intuitive. We give more explicit estimates of the Lipschitz bound in the following corollaries.
\begin{corollary}
\label{thm:prod}
Consider a CNN in the framework of Section II, with $M$ layers and in the $m$-th layer it has 1st type Bessel bound $B_m^{(1)}$. Then the CNN induces a nonlinear map that is Lipschitz continuous, and its Lipschitz bound is given by
\begin{equation}
\prod_{m=1}^M \max \{1,B_m^{(1)}\} ~.
\end{equation}
\end{corollary}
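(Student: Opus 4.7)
The plan is to reduce the corollary to a direct upper bound on the optimal value of the linear program (\ref{eq:lp}) furnished by Theorem \ref{thm:lp}. The key structural observation is that, in both the ``with merge'' and ``without merge'' definitions of the Bessel bounds, one has $B^{(1)}_m \geq B^{(2)}_m$ and $B^{(1)}_m \geq B^{(3)}_m$: in the no-merge case this is the operator-norm inequality for a matrix versus its vertical block sub-matrices (cf.\ (\ref{def:b1mnm})--(\ref{def:b3mnm})), and in the merge case it follows from the pointwise inequality $\abs{\hat\phi_{m,n}}^2 + \sum_k l_{m,n;k} D_{m,n;k}^{-d}\abs{\hat g_{m,n;k}}^2 \geq \abs{\hat\phi_{m,n}}^2$ and $\geq \sum_k l_{m,n;k} D_{m,n;k}^{-d}\abs{\hat g_{m,n;k}}^2$ respectively. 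Consequently, in the linear program the constraint $z_M \leq B^{(3)}_M y_{M-1}$ can be weakened to $z_M \leq B^{(1)}_M y_{M-1}$ without affecting the desired bound.

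With this in hand, I would set $\tilde B_m := \max\{1, B^{(1)}_m\}$ and $P_m := \prod_{j=1}^{m} \tilde B_j$ with $P_0=1$, and prove by induction on $m$ the joint invariant
\begin{equation*}
\sum_{j=1}^{m} z_j + y_m \leq P_m, \qquad 1 \leq m \leq M-1.
\end{equation*}
The base case $m=1$ is immediate from $y_0=1$ and $y_1+z_1 \leq B^{(1)}_1$. For the inductive step, grouping $y_m+z_m$ together and invoking $y_m+z_m \leq B^{(1)}_m y_{m-1}$ yields
\begin{equation*}
\sum_{j=1}^{m} z_j + y_m \leq P_{m-1} + (B^{(1)}_m-1)\, y_{m-1},
\end{equation*}
after which one splits into the two cases $B^{(1)}_m \leq 1$ (so the correction is nonpositive because $y_{m-1}\geq 0$) and $B^{(1)}_m > 1$ (so $y_{m-1} \leq P_{m-1}$ by the inductive hypothesis, and the right-hand side collapses to $B^{(1)}_m P_{m-1} = P_m$).

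To finish, I would handle the terminal layer separately: applying the same argument to $z_M \leq B^{(1)}_M y_{M-1}$ (using the observation of the first paragraph) gives
\begin{equation*}
\sum_{j=1}^{M} z_j \leq P_{M-1} + (B^{(1)}_M - 1)\, y_{M-1} \leq P_{M-1}\tilde B_M = \prod_{m=1}^M \max\{1, B^{(1)}_m\},
\end{equation*}
which is the claimed Lipschitz bound via Theorem \ref{thm:lp}. There is no serious obstacle: the only subtlety is carrying the ``$+y_m$'' term through the induction so that the slack created by sending energy into the hidden output can be reinvested at the next layer, and tightening the final-layer constraint to type $(1)$ via $B^{(1)}_M \geq B^{(3)}_M$ so that the product-form bound closes cleanly.
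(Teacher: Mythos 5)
Your proof is correct and follows essentially the same route as the paper: both arguments bound the optimal value of the linear program (\ref{eq:lp}) using the key facts that $B_m^{(1)}$ dominates $B_m^{(2)}$ and $B_m^{(3)}$ and that $y_m + z_m \leq B_m^{(1)} y_{m-1}$. Your induction on the invariant $\sum_{j\leq m} z_j + y_m \leq P_m$ is just an unrolled version of the paper's telescoping summation of $z_m \leq B_m^{(1)} y_{m-1} - y_m$ combined with $y_m \leq \prod_{m'\leq m}\max\{1,B_{m'}^{(1)}\}$.
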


\begin{corollary}
\label{thm:sumprod}
Consider a CNN in the framework of Section II, with $M$ layers and in the $m$-th layer it has 2nd type Bessel bound $B_m^{(2)}$ and generating bound $B_m^{(3)}$. Then the CNN induces a nonlinear map that is Lipschitz continuous, and its Lipschitz bound is given by
\begin{equation}
B_1^{(3)}+\sum_{m=2}^{M} B_m^{(3)} \prod_{m'=1}^{m-1} B_m^{(2)} ~.
\end{equation}
\end{corollary}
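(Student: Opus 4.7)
The plan is to derive this as an immediate relaxation of the linear program in Theorem \ref{thm:lp}. Since the LP in (\ref{eq:lp}) provides an exact Lipschitz bound, dropping any subset of its constraints yields a new LP whose optimum is $\geq$ the original optimum; hence the new optimum still upper bounds the Lipschitz constant. I would drop precisely the 1st type constraints $y_m + z_m \leq B_m^{(1)} y_{m-1}$, keeping only the 2nd type constraints $y_m \leq B_m^{(2)} y_{m-1}$ and the 3rd type constraints $z_m \leq B_m^{(3)} y_{m-1}$, along with $y_0 = 1$ and the nonnegativity. The resulting LP decouples across layers in a particularly clean way.

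Next, I would bound the $y_m$ variables by induction on $m$. From $y_0 = 1$ and $y_m \leq B_m^{(2)} y_{m-1}$ for $1 \leq m \leq M-1$, one obtains
\begin{equation*}
y_m \;\leq\; \prod_{m'=1}^{m} B_{m'}^{(2)}, \qquad 0 \leq m \leq M-1,
\end{equation*}
with the convention that the empty product equals $1$. Substituting into the 3rd type constraint $z_m \leq B_m^{(3)} y_{m-1}$ (which is the binding upper bound on $z_m$ in the relaxed LP) then gives
\begin{equation*}
z_m \;\leq\; B_m^{(3)} \prod_{m'=1}^{m-1} B_{m'}^{(2)}, \qquad 1 \leq m \leq M.
\end{equation*}

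Summing the $z_m$ bounds over $m = 1, \dots, M$ and noting that for $m=1$ the product is empty, so $z_1 \leq B_1^{(3)}$, yields the claimed bound
\begin{equation*}
\sum_{m=1}^M z_m \;\leq\; B_1^{(3)} + \sum_{m=2}^{M} B_m^{(3)} \prod_{m'=1}^{m-1} B_{m'}^{(2)},
\end{equation*}
which is therefore an upper bound on the LP optimum from Theorem \ref{thm:lp}, and hence a valid Lipschitz bound for the map $\Phi$. There is no real obstacle here; the corollary is essentially a structural reading of (\ref{eq:lp}) that trades the sharper coupled constraint $y_m + z_m \leq B_m^{(1)} y_{m-1}$ for a closed-form expression. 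The only point worth remarking on is that this bound is in general looser than the LP value (it ignores the fact that energy going into the feature branch $z_m$ cannot simultaneously be propagated as $y_m$), but it has the advantage of not requiring the solution of a linear program.
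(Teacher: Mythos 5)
Your proposal is correct and follows exactly the paper's intended argument: the authors state that Corollary \ref{thm:sumprod} follows immediately from the third and fourth inequalities of (\ref{eq:lp}), i.e.\ chaining $y_m \leq B_m^{(2)} y_{m-1}$ from $y_0=1$ and substituting into $z_m \leq B_m^{(3)} y_{m-1}$ before summing, which is precisely your relaxation. Your additional observation that dropping constraints from the maximization LP can only increase its optimum, so the resulting closed-form expression remains a valid Lipschitz bound, is the same justification the paper gives when remarking that both corollaries are more conservative than the linear program.
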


The proof of Corollary \ref{thm:sumprod} is an immediate consequence of Theorem \ref{thm:lp}, specifically from the third and fourth inequalities of (\ref{eq:lp}). The proof of Corollary \ref{thm:prod} is given in Appendix \ref{appendix:prooflipcor}. We remark here that both corollaries give a more conservative bound compared to the linear program (\ref{eq:lp}) because both results restrict the variables to a subset of the feasible region. The idea of using Bessel bounds is also addressed in \cite{SZSBEGF13} where the authors compute the Bessel bounds of each layer of the AlexNet, and in \cite{WB16} where the authors set $B_m \leq 1$ to make the CNN a 1-Lipschitz map. We return to the AlexNet in the following section.

Subject to the knowledge of the three types of Bessel bounds in each layer, the estimate given by the linear program (\ref{eq:lp}) is tight. However three issues may prevent its tightness.
 First, except for the scattering network when defined for continuous inputs, most of CNN's consider discrete time inputs only. Second, even subject to the same Bessel bounds, different filters may produce much smaller Lipschitz bounds. Sub-optimality occurs in cases where the signal that achieves the Bessel bound for Layer $m+1$ is not in the range of Layer $m$. Third, in some practical applications when signals are modeled as samples drawn from certain distributions, then the emphasis is on local stability around the operating distributions, whereas the global Lipschitz bound may be irrelevant. 
 
 We address these issues by looking at three examples: the scattering network, a toy network that includes all three types of merge operations we consider in this paper, and the well-known AlexNet and GoogleNet.





\section{Examples}\label{sec:examples}

\subsection{Scattering network}\label{subsec:scattering}

The scattering network in \cite{Mallat12, BM13}
is a 1-Lipschitz map. In each layer the filters are designed to form wavelet orthonormal bases using multi-resolution analysis. Such design leads to $B_{m,n}^{(1)} = B_{m,n}^{(2)} = B_{m,n}^{(3)} = 1$, for all $m,n$. Then Corollary \ref{thm:prod} simply yields a Lipschitz bound $L = 1$ which is tight. We refer the readers to \cite[Section 4.1]{BSZ17}  for a detailed discussion.

\subsection{A toy example that contains merge operations}\label{subsec:merge}


The scattering network enjoys $B_{m,n}^{(1)} = B_{m,n}^{(2)} = B_{m,n}^{(3)} = 1$ for all $m,n$ since it is tightly related to wavelet decompositions. In many CNN's we don't have feature output from hidden layers and therefore $B_{m,n}^{(1)} = B_{m,n}^{(2)}$, whence the results in Corollary \ref{thm:prod} coincide with the optimal value by the linear program (\ref{eq:lp}). However, Corollary \ref{thm:prod} can be suboptimal. To see this, we take a toy example of CNN that contains merge operations. The same network structure appears also in \cite{BSZ17} with different filter weights. The parameter $p$ is set to $p=2$.

Figure \ref{fig:example} is an illustration of the CNN. According to Appendix \ref{appendix:banach}, we can translate it into a CNN within our framework, as illustrated in Figure \ref{fig:exampleequiv}.

\begin{figure}[!ht]
\centering
'\includegraphics[width=0.75\linewidth]{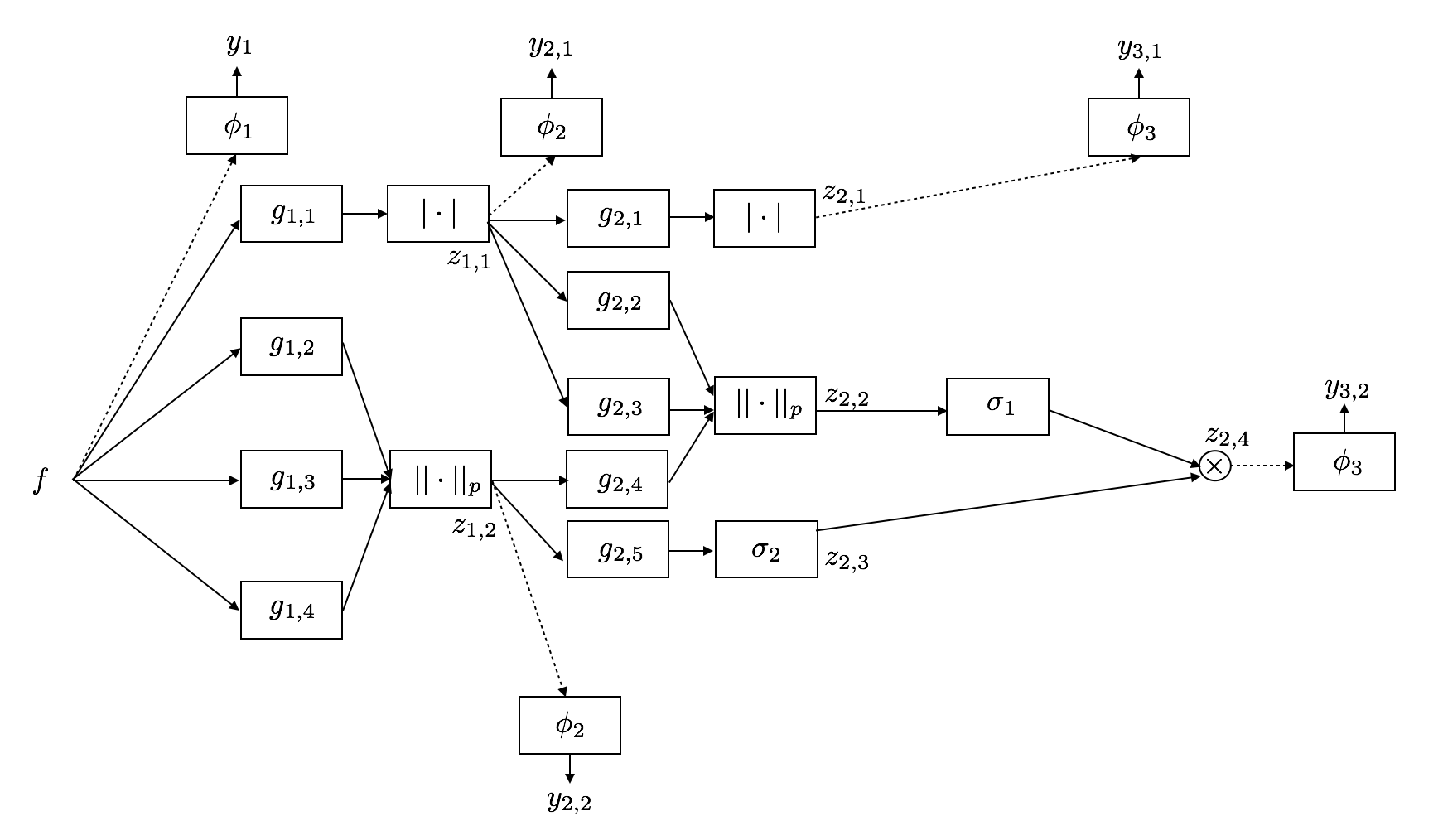}
\caption{The toy example that also appears in \cite{BSZ17}. Note that we have different choices of filters in the numerical experiment.}
\label{fig:example}
\end{figure}

\begin{figure}[!ht]
\centering
\includegraphics[width=0.75\linewidth]{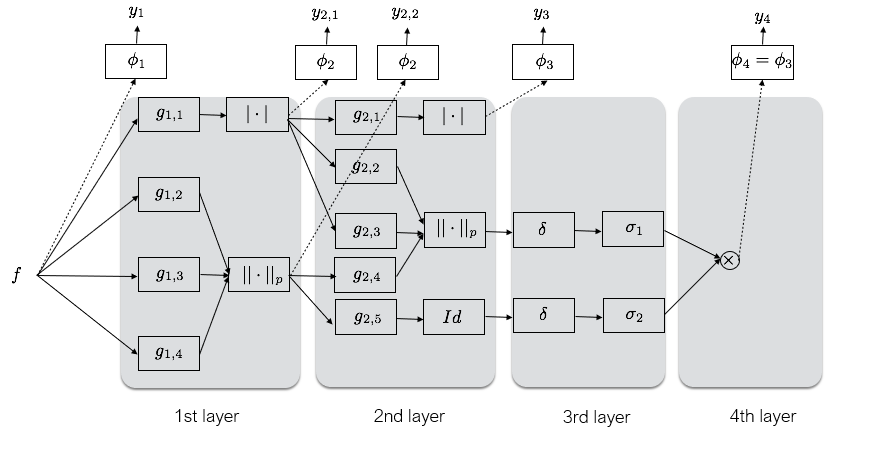}
\caption{Equivalent representation of the CNN. We illustrate the four layers of the network.}
\label{fig:exampleequiv}
\end{figure}

Define the smooth ``gate'' function on the Fourier domain supported on $(-1,1)$ as
\begin{equation}
\begin{aligned}
F(\omega) = & \exp\left( \frac{4\omega^2+4\omega+1}{4\omega^2+4\omega} \right) \chi_{(-1,-1/2)}(\omega) + 
\chi_{(-1/2,1/2)}(\omega) + 
\exp \left( \frac{4\omega^2-4\omega+1}{4\omega^2-4\omega} \right) \chi_{(1/2,1)}(\omega) ~.
\end{aligned}
\end{equation}
With this, we define the Fourier transforms of the filters to be $C^{\infty}$ gate functions
\begin{equation}
\begin{aligned}
\hat{\phi}_1(\omega) ~=~ & F(\omega) \\
\hat{g}_{1,j}(\omega) ~=~ & F(\omega+2j-1/2) + F(\omega-2j+1/2), 
\qquad j = 1,2,3,4. \\
\hat{\phi}_2(\omega) ~=~ & \exp\left(\frac{4\omega^2+12\omega+9}{4\omega^2+12\omega+8}\right) \chi_{(-2,-3/2)}(\omega) + 
\chi_{(-3/2,3/2)}(\omega) + 
\exp\left(\frac{4\omega^2-12\omega+9}{4\omega^2-12\omega+8}\right) \chi_{(3/2,2)}(\omega) \\
\hat{g}_{2,j}(\omega) ~=~ & F(\omega+2j) + F(\omega-2j) ,
\qquad j = 1,2,3. \\
\hat{g}_{2,4}(\omega) ~=~ & F(\omega+2)+F(\omega-2) \\
\hat{g}_{2,5}(\omega) ~=~ & F(\omega+5)+F(\omega-5) \\
\hat{\phi}_3(\omega) ~=~ & \exp \left( \frac{4\omega^2+20\omega+25}{4\omega^2+20\omega+24} \right) \chi_{(-3,-5/2)}(\omega) + 
\chi_{(-5/2,5/2)}(\omega) + 
\exp\left(\frac{4\omega^2-20\omega+25}{4\omega^2-20\omega+25}\right) \chi_{(5/2,3)}(\omega) ~.
\end{aligned}
\end{equation}

\begin{figure}[!ht]
\centering
\includegraphics[width=0.6\linewidth]{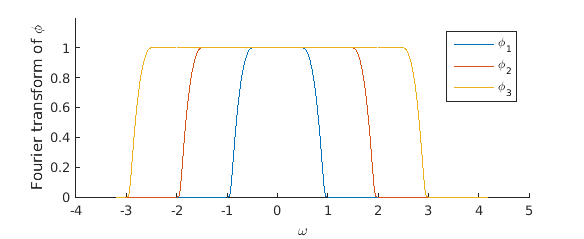}
\caption{Illustration of the filters $\phi_1$, $\phi_2$ and $\phi_3$ in the frequency domain. Note that they are all $C^{\infty}$ smooth functions.}
\label{fig:filters}
\end{figure}

Table I lists the Bessel bounds for all the layers. The optimal value of the linear program (\ref{eq:lp}) gives a Lipschitz bound of $L=2.866$; the Lipschitz bound as derived in Corollary \ref{thm:prod} is $L=8 [\exp(-1/3)]^2 = 4.102$; Corollary \ref{thm:sumprod} gives an estimate of the Lipschitz bound of $L=5$. We see that the output of the linear program (\ref{eq:lp}) is more optimal than the product given in Corollary \ref{thm:prod} and \ref{thm:sumprod}.

\linespread{1.5}

\begin{table}[ht!]
\centering
 \begin{tabular}{||c | p{0.8cm} p{0.8cm} p{0.8cm} c ||} 
 \hline
 $m$ & 1 & 2 & 3 & 4 \\ [0.5ex] 
 \hline\hline
 $B_m^{(1)}$ & $2e^{-1/3}$ & $2e^{-1/3}$ & 2 & 1 \\
 \hline
 $B_m^{(2)}$ & 1 & 1 & 2 & 0 \\
 \hline
 $B_m^{(3)}$ & 1 & 1 & 1 & 1 \\ [1ex] 
 \hline
 \end{tabular}
\label{tab:bessel} 
\vspace{1mm}
\caption{The Bessel bounds of the example in Figure \ref{fig:example}.}
\end{table}

\linespread{1}

\subsection{AlexNet and GoogleNet}\label{subsec:alexgoogle}

In this subsection we analyze the Lipschitz properties of AlexNet and GoogleNet. 
First, we apply the analytical results derived in earlier sections to these networks and compare their results to empirical estimates. To accomplish this, we need to extend the theory hitherto developed to processing of discrete signals. Second, we construct a local Lipschitz analysis theory and explain the gap between the analytical and empirical estimates. In this process, we obtain additional information on local stability and robustness of the network, which we exploit in the third part of this subsection where we apply these results to adversarial perturbations. 

\subsubsection{Extending to Discrete Signal Processing}

The AlexNet and the GoogleNet have filters trained on specific datasets, with no closed form parametric description of their weights such as the wavelets. Therefore, instead of using approximations of the continuous signal theory to these networks, we extend the theory to discrete signal processing. We do this by computing the Bessel bounds using the Discrete Fourier Transform along the lines in \cite[Section 4.3]{SZSBEGF13} subsequent to the computation of the operator norms of the discrete linear operators. Given the Bessel bound, the Lipschitz bound is computed using the same estimates derived earlier to the case of continuous signals. 

First we compute the Bessel bounds. 
Note that both networks do not generate feature outputs in hidden layers. Therefore, $B_{m}^{(1)} = B_{m}^{(2)}$, $B_{m}^{(3)} = 0$ for each $1 \leq m \leq M-1$. Now the fourth line of (\ref{eq:lp}) forces $z_m = 0$ for $m = 1, \cdots, M-1$, which makes the second and third lines equivalent. Note that Corollary \ref{thm:prod} is derived by looking at the third and fourth lines of (\ref{eq:lp}). Consequently, the linear program (\ref{eq:lp}) and Corollary \ref{thm:prod} provide the same Lipschitz bounds $L = B_{M}^{(3)} \prod_{m=1}^{M-1} B_m^{(2)}$. 

There are several versions of trained networks for AlexNet and GoogleNet. We consider the MatConvNet \cite{VedL15} pretrained networks that are trained using ImageNet (ILSVRC2012) dataset \cite{ILSVRC15} (the trained networks for both AlexNet and GoogleNet are retrievable at http://www.vlfeat.org/matconvnet/pretrained/). For both pretrained models, there are no cross-channel response normalizations (which appears in the original model \cite{Krizhevsky12}). The features are extracted after the last convolution layer in each network.

We present the Bessel bounds ($B_m^{(2)}$ for $1 \leq m \leq M-1$ and $B_m^{(3)}$ for $m = M$) for each layer of the AlexNet in Table II and the GoogleNet in Table III. Since we are in the discrete case (previous sections discuss signals $f \in L^2(\RR^d)$ and continuous convolutions), we need to adjust the way the Bessel bounds in (\ref{def:b2mnm}) and (\ref{def:b3mnm}) are computed. The adjusted computations for the AlexNet follows \cite[ Section 4.3]{SZSBEGF13}, which uses the Discrete Fourier Transform and takes striding into account. For the GoogleNet, we treat the inception modules (see \cite[Figure 2(b)]{Szegedy15}) as two layers: the first layer is the scattering with the dimension reductions (denoted as ``icpxreduce'' in Table III), and the second layer is the merging after taking convolutions (denoted as ``icpxconv'' in Table III).

\linespread{1.5}

\begin{table}[ht!]
\label{tab:alexnet} 
\centering
 \begin{tabular}{||c | c||} 
 \hline
 Layer & Lip const \\ [0.5ex] 
 \hline\hline
 conv1 & 0.2628  \\ 
 \hline
 conv2 & 6.7761  \\
 \hline
 conv3 & 6.5435  \\
 \hline
 conv4 & 13.3898 \\
 \hline
 conv5 & 16.0937 \\
 \hline
\end{tabular}
\vspace{1mm}
\caption{The Bessel constants (= square root of Bessel bounds) for each layer of AlexNet.}
\end{table}

\begin{table}[ht!]
\label{tab:googlenet} 
\centering
 \begin{tabular}{||c | c|} 
 \hline
 Layer & Lip const \\ [0.5ex] 
 \hline\hline
 conv1 & 5.8608  \\ 
 \hline
 reduce2 & 3.4147  \\
 \hline
 conv2 & 3.0309  \\
 \hline
 icp1reduce & 3.6571 \\
 \hline
 icp1conv & 5.2917 \\
 \hline
 icp2reduce & 3.7994 \\
 \hline 
 icp2conv & 7.6367 \\
 \hline
 \end{tabular}
 \begin{tabular}{|c | c|}
 \hline 
  Layer & Lip const \\ [0.5ex] 
 \hline\hline
 icp3reduce & 2.6642 \\
 \hline 
 icp3conv & 6.0129 \\
 \hline 
 icp4reduce & 2.6403 \\
 \hline 
 icp4conv & 5.1029 \\
 \hline 
 icp5reduce & 2.9825 \\
 \hline 
 icp5conv & 5.5389 \\
 \hline 
 icp6reduce & 3.1758 \\
 \hline
 \end{tabular}
 \begin{tabular}{|c | c||} 
 \hline 
  Layer & Lip const \\ [0.5ex] 
 \hline\hline
 icp6conv & 6.7737 \\
 \hline 
 icp7reduce & 2.2093 \\
 \hline 
 icp7conv & 6.5312 \\
 \hline 
 icp8reduce & 2.2947 \\
 \hline 
 icp8conv & 5.5561 \\
 \hline 
 icp9reduce & 2.8567 \\
 \hline 
 icp9conv & 7.0353 \\
 \hline 
\end{tabular}
\vspace{1mm}
\caption{The Bessel constants (= square root of Bessel bounds) for each layer of GoogleNet.}
\end{table}

\linespread{1}

Using the computed Bessel constants and Corollary \ref{thm:prod}, the estimated Lipschitz constant for the AlexNet is $2.51 \times 10^3$ and for the GoogleNet is $9.67 \times 10^{12}$. Subject to the Bessel computed computed above (which are tight), the CNN Lipschitz bound estimates cannot be improved analytically. Instead we perform an empirical study. Specifically, we randomly take two images $f_1$ and $f_2$ from ImageNet, and compute the ratio $|||\Phi(f_1)-\Phi(f_2)||| / \norm{f_1 - f_2}_2$, where $\Phi$ is the Lipschitz map induced by the network. The empirical Lipschitz constant is the largest ratio among all samples that we take. We sample $10^6$ pairs for this experiment. The resulting empirical constant is $7.32 \times 10^{-3}$ for the AlexNet, and $4.84 \times 10^{-2}$ for the GoogleNet. 

The empirical constants are of significantly smaller order than the analytical constants. In general, two factors explain the gap between the analytical and empirical Lipschitz constant estimates: first, the principal singular vector that optimizes the operator norm in a given layer in not in the range of signals reachable by the previous layer; second, whenever we have ReLU nonlinearity and max pooling, the distance between two vectors tends to shrink. 

The first factor can be partially addressed by considering the norm of tensorial product of all layers instead of considering the product of tensor norms in each layer individually (similar to computing the operator norm of a product of matrices directly instead of upper bounding it by the product of operator norms of each matrix). Both this and the second factor can be addressed by a framework that locally linearizes the network for analysis. We demonstrate how to do this in the following subsection.


\subsubsection{Local Lipschitz Analysis}

To begin with, we estimate the Lipschitz constants without the ReLU functions to illustrate the impact of the nonlinearity. We construct the AlexNet and GoogleNet without the ReLU units by replacing them with the identity functions, and repeat the experiments of taking ratios from pairwise random samples. Empirically, the ratio estimated in this way is $9.08 \times 10^{-2}$ for the AlexNet and $1.10 \times 10^{3}$ for the GoogleNet. Note that these constants are larger than the empirical Lipschitz constants for the networks with the ReLU units.

The nonlinearities also have a non-negligible impact on the Lipschitz constant. To handle them in the analysis, we linearize them locally and compute the local Lipschitz constants. The local Lipschitz constant of $\Phi$ at $f \in \cD$ for $\epsilon$-neighborhood is defined by 
\begin{equation}\label{def:localLip}
L^{\textup{loc}}(f, \epsilon) := \sup_{\substack{f' \in \cD \\ \norm{f'-f}_2 < \epsilon}} \frac{ ||| \Phi(f') - \Phi(f) ||| }{ \norm{f'-f}_2} ~.
\end{equation}
Note that for the case of the AlexNet and the GoogleNet (and similarly for all other discrete networks), the input signal is from a compact domain $\cD = \fI^D$ where $\fI$ is the interval for the pixel values, and $D$ is the dimensionality (the number of pixels). Since $\cD$ is convex, the Lipschitz constant of $\Phi$ is the maximum of the local Lipschitz constants on $\cD$. The rigorous proof of this claim is given in Appendix \ref{appendix:localLip}. 

Using the linearization formulas, we estimate numerically the local Lipschitz constants. The procedure is described as follows. We vectorize \footnote{For a matrix $A = [a_1 | a_2 | \cdots | a_D] \in \RR^{D \times D}$ where $a_1, a_2, \cdots, a_D$ are $D$-dimensional vectors, we vectorize $A$ to be $A^{\textup{vec}} = [a_1^t | a_2^t | \cdots | a_D^t]^t$} the input image and the output feature vector. Also, we use Toepliz matrices $T_1, T_2, \cdots, T_M$ to represent filters in each layer. For any input sample $f$, the CNN generates the output feature vector $\Phi(f)$ by propagating $f$ through $T_m$'s and the nonlinearities that activate only a subset of the pixels for the hidden layer outputs. For the $m$-th layer, we delete (remove) the rows that correspond to the pixels not activated by the ReLU units and max pooling (if they exist) in $T_m$, and the corresponding columns in $T_{m+1}$. In this way we obtain matrices $T'_1, T'_2, \cdots, T'_M$. The product $T'[f] = T'_M T'_{M-1} \cdots T'_2 T'_1$ represents the locally linearized operator for the CNN acting at $f$. For a small $\epsilon$, the local Lipschitz constant at $f$ is estimated by $L^{\textup{loc}}(f, \epsilon) \approx \sigma_{\max}(T'[f])$, the largest singular value of $T'$. The Lipschitz constant for $\Phi$ is thus estimated by 
$$L_c = \max_{f \in \fI^D} L^{\textup{loc}}(f,\epsilon) =
\max_{f \in \fI^D} \sigma_{\max}(T'[f]),$$
where the second equality follows if we take the maximum over the entire compact convex set $\fI^D$ (see Appendix \ref{appendix:localLip}). However, for numerical reasons, we replace $\fI^D$ with a finite number of samples ${\cal F}$ thus obtaining an approximate (lower) bound:
$$ L_c \approx \max_{f\in \cal F } \sigma_{\max}(T'[f]).$$

We follow the procedure described above to estimate the Lipschitz constant for the AlexNet, with ${\cal F}$ having 500 random samples drawn from the ImageNet (ILSVRC2012) dataset. Figure \ref{fig:histAlex} illustrates the histogram of these results. We see that the local Lipschitz constants in our case are between 0.2 and 1.6, hence of order 1. Table \ref{tableIV} summarizes the results of the analytical, empirical and numerical local Lipschitz constants analysis for the AlexNet. 

One would naturally ask if the 500 random samples we chose for this analysis are sufficient to infer an accurate estimate of the Lipschitz constant. 
To address this question we performed two sets of experiments. 
 First we test if the local Lipschitz constant is narrowly distributed over samples in each class and whether the distribution changes for random input signals (i.e. artificial noise input images).  Figure \ref{fig:histTench} depicts the histogram of local Lipschitz constants for images from class ``tench'' (left plot), and compares it with the histogram of local Lipschitz constants for i.i.d. Gaussian noise images (right plot). We note that the local Lipschitz constants for Gaussian noise are much more concentrated around a significantly smaller mean than for the class ``tench''. This implies that the AlexNet behaves differently for different ImageNet samples from the same class. On the other hand the distribution of local Lipschitz constants for images from same class reflects the same range of values as the distribution over all 500 images considered in Figure \ref{fig:histAlex}.
 This experiment gives us confidence that the estimated Lipschitz constant over the 500 ImageNet images is nearly tight.

\begin{figure}[!ht]
\centering
\includegraphics[width=0.5\linewidth]{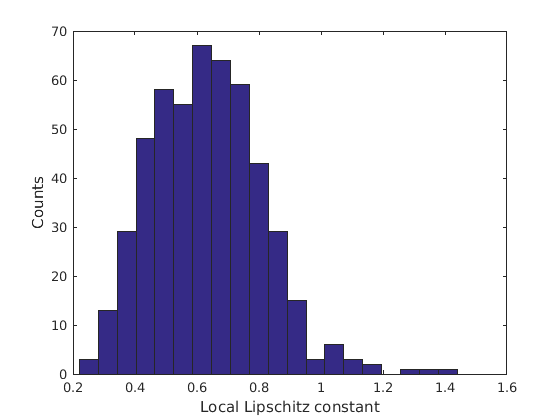}
\caption{The histogram of the local Lipschitz constants for the AlexNet for 500 sample images taken from the ImageNet dataset.}
\label{fig:histAlex}
\end{figure}

\linespread{1.5}

\begin{table}[ht!]
\centering
 \begin{tabular}{||c | c||} 
 \hline
Method & Lip const \\ [0.5ex] 
 \hline\hline
 Analytical estimate: compute Bessel bounds and follow Corollary \ref{thm:prod} & $2.51 \times 10^3$  \\ 
 \hline
 Empirical bound: take quotient from pairs of samples & $7.32 \times 10^{-3}$  \\
 \hline
 Numerical approximation: compute local Lipschitz constants and take the maximum & $1.44$  \\
 \hline
\end{tabular}
\vspace{1mm}
\caption{The Lipschitz constant estimation using three methods for the AlexNet.\label{tableIV}}
\end{table}

\linespread{1}

\begin{figure}[ht!]
\begin{subfigure}{.5\textwidth}
  \centering
  \includegraphics[width=.8\linewidth]{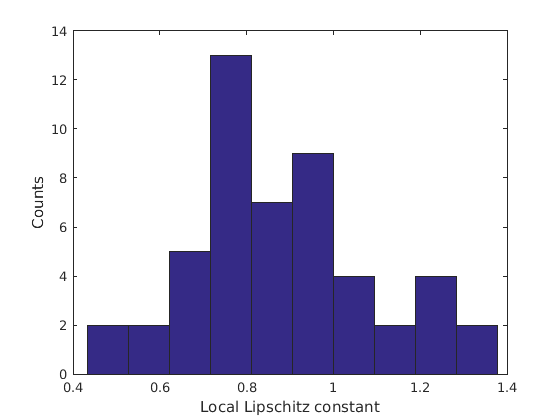}
\end{subfigure}
\begin{subfigure}{.5\textwidth}
  \centering
  \includegraphics[width=.8\linewidth]{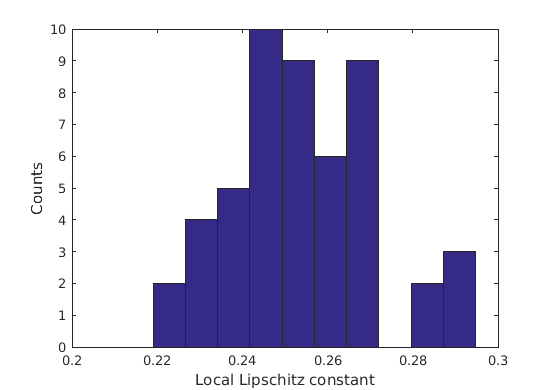}
\end{subfigure}
\caption{Two histograms of local Lipschitz constants for the AlexNet: the left plot contains the results of 50 samples from the class ``tench''; the right plot contains the results from 50 samples from i.i.d. Gaussian distribution of same size ($224\times 224\times 3$).}
\label{fig:histTench}
\end{figure}

On the other hand, as observed from Table IV, the Lipschitz constant computed by taking the maximum of the local Lipschitz constant is about 3 orders of magnitude larger than the empirically computed constant. This surprising observation implies that the direction of maximum variation (the principal singular vector) varies significantly from one ImageNet sample to another. Furthermore, the local Lipschitz constant is large only in a small  neighborhood around each sample. In order to estimate the largest perturbation that achieves the local Lipschitz bound we performed the following experiment. 
For input signal $f$, let $v$ denote the principal singular vector of norm $\norm{v}_2 = 1$ that corresponds to the largest singular value $\sigma_{\max}$. By definition, we have \[\lim_{\epsilon \rightarrow 0} L^{\textup{loc}}(f, \epsilon) = \lim_{t \rightarrow 0} \frac{ |||\Phi(f+t \cdot v) - \Phi(f)||| }{ t} = \sigma_{\max}.\]
Figure \ref{fig:lipVeryLocal} shows how the quotient $|||\Phi(f+h \cdot v) - \Phi(f)||| / h$ changes with $h$. Note that the convergence as $h$ approaches $0$ is very slow. 
In particular, this experiment confirms that the local Lipschitz constant is achievable, hance the numerical estimates in Table \ref{tableIV} are not just numerical artifacts, but actual achievable ratios. On the other hand, Figure \ref{fig:lipVeryLocal} shows that the largest relative variation of the output (i.e. the ratio $ |||\Phi(f) - \Phi(\tilde{f})||| / ||f-\tilde{f}||_2 $) is achieved by small perturbations only. 
In general, given a pair of different image samples from ImageNet, their $l^2$-distance is much larger than $10^{-5}$, so they cannot reflect the local oscillation of $\Phi$.

\begin{figure}[!ht]
\centering
\includegraphics[width=0.5\linewidth]{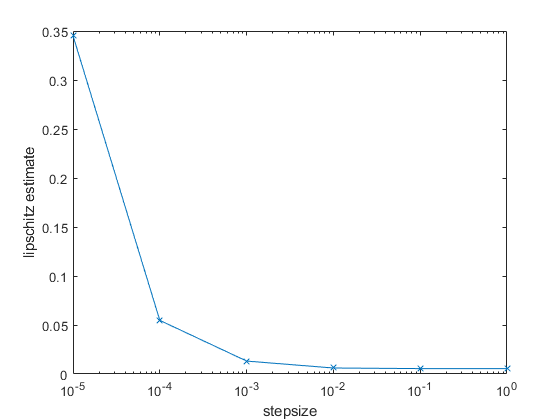}
\caption{The ratio $|||\Phi(f+h \cdot v) - \Phi(f)||| / h$ for different $h$.}
\label{fig:lipVeryLocal}
\end{figure}

\subsubsection{Adversarial Perturbation Induced by the Local Lipschitz Constants}

CNN's such as the AlexNet and the GoogleNet are shown to be vulnerable to small perturbations \cite{SZSBEGF13, Moo17, su2017one}. This kind of instability of those deep networks not only leads to difficulties in cross-model generalization, but also causes serious security problems in practice \cite{papernot2016distillation, zhang2016adversarial}. 
An adversarial perturbation is a small perturbation of the input signal that changes the classification decision of the CNN. The perturbation can be constructed by solving an optimization problem where the wrong classification is considered as a loss in the objective function, as described in \cite{SZSBEGF13}. Various optimization settings can be found in \cite{Moo17, su2017one} where specific restriction on the perturbation is required.

The local Lipschitz analysis carried out in the previous section characterizes the impact of varying the direction of signals perturbations on the output of the CNN. It can be seen that for the same amount of input perturbation, different directions can be chosen to achieve a better adversarial impact on the network performance. We use this observation to create adversarial perturbations below. We show that a relative change of the order of $10^{-2}$ can lead the network to wrongly characterize the input image. 

Since a local Lipschitz constant is associated with a singular vector $v_0$ with $\norm{v_0}_2 = 1$ which is the direction that $\Phi$ varies the most at $f$, we expect this direction gives a perturbation that ``fools'' the CNN more than other directions. The task is to find the smallest $h$ for which $f$ and $f' = f+h \cdot v_0$ are labeled differently by the CNN. We use the AlexNet and empirically search for $h$. For each sample, we find the smallest $h$ that fools the AlexNet. One such example is given in Figure \ref{fig:exPerturb}. We take 50 samples and find that the optimal $h_{\textup{opt}}$'s have order of magnitude $10^3$, which is relatively small compared to $\norm{f}_2$ (we have $227 \times 227 \times 3$ input with pixel values in $[0, 255]$, so the relative change is of the order $10^{-2}$). Note that this order of $h$ is also observed in \cite{Moo17}, where the 2-norm of the perturbation is chosen to be 2000. Further, for each sample, we take 1000 random directions $v_{\textup{rand}}$, and compare the labels given by the AlexNet for $f$ and $f + (h_{\textup{opt}}+\Delta h) \cdot v_{\textup{rand}}$ for a set of different values of $\Delta h$. We plot the percentage of directions that fools the AlexNet on average for these samples in Figure \ref{fig:perFool}. Surprisingly, the direction informed by the local Lipschitz constant performs better than most directions, although at for $h > 10^3$ the quotient $|||\Phi(f+h \cdot v) - \Phi(f)||| / h$ is much smaller than the Lipschitz constant at $f$. Empirically, this implies that the local Lipschitz constant is still important although it decreases fast outside a small region.

\begin{figure}[ht!]
\begin{subfigure}{.33\textwidth}
  \centering
  \includegraphics[width=\linewidth]{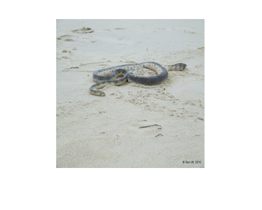}
  \caption{The original image}
\end{subfigure}
\begin{subfigure}{.33\textwidth}
  \centering
  \includegraphics[width=\linewidth]{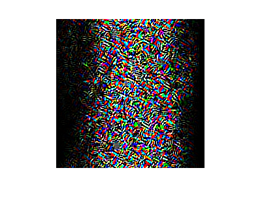}
  \caption{The perturbation (scaled for visibility)}
\end{subfigure}
\begin{subfigure}{.33\textwidth}
  \centering
  \includegraphics[width=\linewidth]{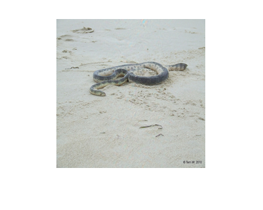}
  \caption{The perturbed image}
\end{subfigure}
\caption{An example of the perturbation along the direction of the singular vector. The left is the original image, the middle is the perturbation which is amplified 1000 times for clear illustration, and the right is the perturbed image. The AlexNet recognizes the original image as ``king snake'' but the perturbed one as ``loggerhead turtle''.}
\label{fig:exPerturb}
\end{figure}

\begin{figure}[!ht]
\centering
\includegraphics[width=0.5\linewidth]{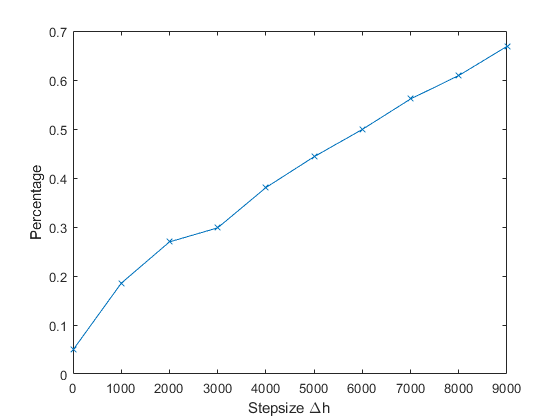}
\caption{Average percentage of successful perturbations in 1000 random directions. $\Delta h = 0$ is the smallest stepsize where the perturbation along the direction informed by the local Lipschitz constants successfully fools the AlexNet.}
\label{fig:perFool}
\end{figure}

\section{Stationary processes}\label{sec:stationary}
Signals (audio or image) are often modeled as random processes. In our case, there are two ways to model the input signal of a CNN: one is to consider $X(t)$ as a random process (field) with some underlying probability space $(\Omega, \fF, \PP)$ with finite second-order moments (see \cite[Chapter 4]{Mallat12}); the other is to regard $X$ as a random variable such that
\[X: (\Omega, \fF, \PP) \rightarrow L^2(\RR^d) ~.\]

We first present the former model for our framework in Section \ref{sec:definecnn}. In the following, we use the notation $X(t)$ to emphasize the time (space) variable $t \in \RR^d$ and $X_t(\omega)$ to emphasize $\omega \in \Omega$. We are interested in studying stationary signals. Fix a realization $X(t) = X_{\omega}(t)$ for some $\omega \in \Omega$. $X(t)$ is said to be strict-sense-stationary (SSS) (see, for instance, \cite{KorS07}, Chapter 16) if all of its finite-order moments are time-invariant (its cumulative distribution does not change with time). The output of a CNN is SSS provided that the input $X$ is SSS. This is stated as the following lemma.
\begin{lemma}\label{lem:sss}
Consider a CNN in the framework of Section \ref{sec:definecnn} in which there is no dilation operation. Let $\Phi$ be the induced Lipschitz continuous map as defined in (\ref{def:Phi}). If $X$ is an SSS process, then so is $\Phi(X)$.
\end{lemma}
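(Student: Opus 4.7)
The plan is to reduce the claim to the translation-equivariance of $\Phi$. Recall that strict-sense stationarity of a process $Y$ on $\RR^d$ (scalar- or tuple-valued) means that for every shift $\tau \in \RR^d$, the shifted process $T_\tau Y$ defined by $(T_\tau Y)(t) = Y(t-\tau)$ has the same law as $Y$. Since $\Phi$ is a fixed (deterministic) measurable map, the identity $\Phi(T_\tau X) = T_\tau \Phi(X)$ combined with $T_\tau X \stackrel{d}{=} X$ yields $T_\tau \Phi(X) \stackrel{d}{=} \Phi(X)$, which is exactly what needs to be shown. So the entire content of the proof is establishing the equivariance $\Phi \circ T_\tau = T_\tau \circ \Phi$.

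To do this, I would argue layer by layer and operation by operation that every building block in Section \ref{sec:definecnn} commutes with spatial translation, \emph{provided} that no dilation is present. Explicitly: (i) convolution with a filter $g \in \cB$ commutes with translation, since $(T_\tau y)\ast g = T_\tau(y\ast g)$; (ii) a pointwise Lipschitz nonlinearity $\sigma$ satisfies $\sigma\circ T_\tau = T_\tau \circ \sigma$ trivially; (iii) each of the three merge operations (sum of $\sigma_j(y_j)$, pointwise $p$-norm for $p \le \infty$, and pointwise product) is defined pointwise in $t$, and therefore commutes with $T_\tau$; (iv) the pooling filter $\phi_{m,n}$ is itself a convolution, which falls under (i). Because dilation is the only ingredient in the layer definition that fails to commute with translation (one has $(T_\tau f)(Dx) = f(Dx-\tau)$, which is not a translate of $f(Dx)$ unless $D=I$), the hypothesis ``no dilation operation'' is exactly what is needed for every layer map, and hence their composition $\Phi$, to commute with $T_\tau$.

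Once $\Phi \circ T_\tau = T_\tau \circ \Phi$ is established, the conclusion follows from a standard measure-theoretic argument. Fix any $\tau \in \RR^d$. The SSS hypothesis means that the $L^2(\RR^d)$-valued random variables $X$ and $T_\tau X$ have the same law on $L^2(\RR^d)$ equipped with its Borel $\sigma$-algebra (equivalently, all finite-dimensional marginals $(X(t_1),\ldots,X(t_k))$ and $(X(t_1+\tau),\ldots,X(t_k+\tau))$ coincide in distribution). Since $\Phi : L^2(\RR^d) \to [L^2(\RR^d)]^{|\cV|}$ is Lipschitz by Theorem \ref{thm:lp}, it is Borel measurable, and pushing forward preserves equality in distribution: $\Phi(X) \stackrel{d}{=} \Phi(T_\tau X) = T_\tau \Phi(X)$. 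This equality of the laws of $\Phi(X)$ and $T_\tau\Phi(X)$ as tuple-valued processes is exactly joint SSS of the family $(f_N)_{N\in\cV}$, completing the argument.

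The only mildly delicate point, and what I expect to require the most care in writing, is bookkeeping across the multiple channels and multiple layers: showing that if all signals $h_{m,n}$ entering layer $m$ satisfy $h_{m,n}(\cdot - \tau) = (T_\tau h_{m,n})(\cdot)$ jointly (i.e., the same shift $\tau$ on every channel), then the same holds for all hidden outputs $h'_{m,n'}$ and feature outputs $f_{m,n}$ produced by that layer. This is a straightforward induction on $m$, but it needs the layer-internal operations (linear operator $T^{(m)}$ in the no-merge case, and the three merge types in the general case) to be applied channel-wise simultaneously, which they are. Once this induction is set up cleanly, the translation-equivariance of $\Phi$ and therefore the SSS property of $\Phi(X)$ follow immediately.
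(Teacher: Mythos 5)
Your reduction to translation-equivariance is the right structural idea and is essentially the one the paper uses: Appendix \ref{appendix:lemsss} verifies that each building block preserves strict stationarity (pointwise nonlinearities via preimages, convolutions separately) and then tracks the property from input to output, with the absence of dilation being exactly what makes every block commute with shifts. The gap is in your final pushforward step. You invoke the Borel measurability of $\Phi : L^2(\RR^d) \to [L^2(\RR^d)]^{|\cV|}$ (from its Lipschitz continuity in Theorem \ref{thm:lp}) to transport $T_\tau X \stackrel{d}{=} X$ into $T_\tau\Phi(X)\stackrel{d}{=}\Phi(X)$. But for Lemma \ref{lem:sss} the paper works in its first model, where $X$ is a random field $X_t(\omega)$ with finite second moments, not an $L^2(\RR^d)$-valued random variable: a nontrivial SSS process almost surely has $\int_{\RR^d}\abs{X(t)}^2\,dt=\infty$, so its sample paths do not lie in $L^2(\RR^d)$ and the map $\Phi$ of Theorem \ref{thm:lp} cannot be applied to them as stated. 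Equality in law must therefore be read as equality of finite-dimensional distributions (or of laws on a path space with the cylinder $\sigma$-algebra), and then the pushforward argument breaks precisely at the convolution: $X\ast g(t)=\int X(t-s)g(s)\,ds$ is not a function of finitely many coordinates of the path, so equality of the finite-dimensional marginals of $X$ and $T_\tau X$ does not by itself yield equality of the finite-dimensional marginals of $X\ast g$ and $T_\tau(X\ast g)$.

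This is exactly where the paper does extra work: it represents the SSS process through a semigroup of measure-preserving transformations $T^t:\Omega\to\Omega$ with $X_t(\omega)=f(T^t\omega)$, writes $X\ast g(t)=\int f(T^{t-s}\omega)g(s)\,ds$, and uses measure preservation of $T^\tau$ to conclude that the shifted event $\tilde\Omega_\tau$ has the same probability as $\tilde\Omega_0$. Your items (ii) and (iii) — pointwise nonlinearities and the three merge types — are genuinely pointwise in $t$ and go through as you describe; that part matches the paper's first fact. To close the gap you should either adopt the ergodic-theoretic representation for the convolution step, or set up from the outset a path space on which all of $\Phi$ is provably measurable and on which the shift acts measurably; without one of these, the sentence ``pushing forward preserves equality in distribution'' is not yet justified for the one operation that actually needs it.
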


\begin{remark}
In general, if we apply dilations for random processes, the signals are no longer stationary after the merge operations. To see a concrete example, let $\theta$ be a random variable taking values uniformly in $[0, 2\pi)$. Consider $X(t) = \cos(t + \theta)$ which has i.i.d. distribution over time and is thus SSS. Note that $Y(t) := X(t) + X(3t) = \cos(t + \theta) +  \cos(3t + \theta) =  2 \cos(2t+\theta) \cos(t)$ has different distributions at $t=0$ and $t=\pi/2$, and is thus not SSS. Therefore, throughout this section, we assume that there is no dilation operation in our CNN.
\end{remark}

Now we state the result that connects the Lipschitz bound derived in Section III with stationary processes.
\begin{theorem}\label{thm:sss}
Consider a CNN in the framework of Section \ref{sec:definecnn} in which there is no dilation operation. Let $X$ and $Y$ be SSS processes with finite second-order moments. Then
\begin{equation}
\EE \left(\Big|\Big|\Big| \Phi(X)-\Phi(Y) \Big|\Big|\Big|^2 \right) \leq L \cdot \EE \left(\abs{X-Y}^2 \right) ~.
\end{equation}
In particular, $||| \Phi(X) |||^2 \leq L \cdot \EE \left(\abs{X}^2 \right)$.
\end{theorem}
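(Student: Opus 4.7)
The plan is to replay the proof of Theorem \ref{thm:lp} with $L^2$-norms replaced by time-invariant second moments and Plancherel's identity replaced by the spectral representation of jointly WSS vector processes. Lemma \ref{lem:sss} guarantees that when $X$ and $Y$ are (jointly) SSS with finite second moments, all intermediate hidden signals, feature outputs, and their componentwise differences are jointly SSS. Consequently $\EE(|h_{m,n}(t)-\tilde h_{m,n}(t)|^2)$ is constant in $t$ at every node of every layer, and the theorem reduces to showing
\[\sum_{N\in\cV}\EE\bigl(|f_N(t)-\tilde f_N(t)|^2\bigr)\;\leq\;L\cdot\EE\bigl(|X(t)-Y(t)|^2\bigr).\]

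I would establish stationary analogues of the three Bessel-type inequalities layer by layer. For each layer $m$, denote by $\mathbf{u}_m$ the vector of input-node differences $(h_{m,n}-\tilde h_{m,n})_{n=1}^{n_m}$ and set
\[y_m:=\sum_{n=1}^{n_m}\EE\bigl(|(\mathbf{u}_m)_n(t)|^2\bigr),\qquad z_m:=\sum_{N\in\cV_m}\EE\bigl(|f_N(t)-\tilde f_N(t)|^2\bigr),\]
so that $y_0=\EE(|X(t)-Y(t)|^2)$. The crucial ingredient is that by Bochner's theorem a jointly WSS vector process carries a matrix-valued spectral measure $\mu_{\mathbf{u}}(d\omega)$ with total trace equal to the sum of component second moments, and convolution with a filter matrix $T$ transforms the spectral measure to $\hat T(\omega)\mu_{\mathbf{u}}(d\omega)\hat T(\omega)^*$. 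A uniform bound $\|\hat T(\omega)\|_{\op}^2\leq B$ combined with the cyclic property of the trace yields
\[\int\tr\bigl(\hat T(\omega)\mu_{\mathbf{u}}(d\omega)\hat T(\omega)^*\bigr)\;\leq\;B\int\tr\bigl(\mu_{\mathbf{u}}(d\omega)\bigr),\]
which is precisely the WSS analogue of the Fourier-domain estimate used in the proof of Theorem \ref{thm:lp}. Applied block-by-block to the three matrices appearing in (\ref{def:b1mnm})--(\ref{def:b3mnm}), this gives that $(y_m,z_m)$ satisfies the three constraints of the linear program (\ref{eq:lp}) with the very same Bessel constants. The pointwise 1-Lipschitz nonlinearities pass through via the elementary $|\sigma(a)-\sigma(b)|^2\leq|a-b|^2$ and monotonicity of expectation.

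Once these constraints are verified, the tuple $(y_m/y_0,z_m/y_0)$ is a feasible point of (\ref{eq:lp}) whose objective is at most the optimal value $L$; clearing denominators gives the stated inequality. The particular case $|||\Phi(X)|||^2\leq L\cdot\EE(|X|^2)$ then follows by specializing to the trivially SSS zero process $Y\equiv 0$, together with the standard assumption that the nonlinearities vanish at the origin so that $\Phi(0)=0$. The step I expect to require the most care is the bookkeeping of the merge operations: one must verify that the multipliers $l_{m,n;k}$ absorb the Type I/II/III aggregations correctly in the second-moment setting, mirroring the pointwise $p$-norm and product bounds established in Section \ref{sec:definecnn}. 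The no-dilation hypothesis is essential throughout, because as the remark following Lemma \ref{lem:sss} illustrates, a dilation composed with a merge generically destroys joint stationarity, so the layer quantities $y_m$ and $z_m$ would cease to be well-defined time-invariant constants.
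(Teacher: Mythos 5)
Your proposal is correct and follows essentially the same route as the paper's proof: the paper likewise invokes Lemma \ref{lem:sss} to get stationarity at every node, uses the Wiener--Khinchin transfer relation $\hat S_W(\omega)=\hat S_Z(\omega)|\hat g(\omega)|^2$ to bound the filtered second moments by $\EE\left(|Z-\tilde Z|^2\right)\cdot\big\|\sum_j|\hat g_j|^2\big\|_\infty$, and then feeds the resulting layer-wise inequalities into the linear program (\ref{eq:lp}) exactly as in Theorem \ref{thm:lp}, noting that the merge estimates survive with $\norm{\cdot}_2^2$ replaced by $\EE|\cdot|^2$. Your formulation via a matrix-valued spectral measure and Bochner's theorem is a somewhat more careful treatment of the multi-input-node case (the paper writes out only the scalar, single-input transfer bound and asserts the rest parallels Theorem \ref{thm:lp}), and your observation that the final clause requires $\Phi(0)=0$, i.e.\ nonlinearities vanishing at the origin, makes explicit a detail the paper leaves implicit.
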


The proof parallels that of Section III, and we present it in Appendix D.

As mentioned above, we can also follow the second way to model the signal as a random variable $X: \Omega \rightarrow L^2(\RR^d)$. In this case, we have a random variable with values in a Banach space (see a detailed discussion of such random processes in \cite{LedT91,Led05}). In particular, let $\Phi$ be the map induced by the CNN, and $L_c = \sqrt{L}$ be the Lipschitz constant. Denote $Y = \Phi \circ X$ to be the received random variable. Then by Proposition 1.2 in \cite{Led05}, we have the concentration function $\alpha(L^2(\RR^d),\PP_Y)(r) \leq \alpha(L^2(\RR^d),\PP_X)(r/L_c)$. Suppose $X$ is Gaussian (see \cite{LedT91}, Chapter 2 for the definition in this case) and let $\sigma = \sigma(X) = \sup (\EE \norm{X}_2^2)^{1/2}$. Then similar to the concentration inequality as in Lemma 3.1 in \cite{Led05}, there exists a median $\fM > 0$ for which we have both
\[\PP_Y(\norm{Y-\EE(Y)}_2 \leq \fM) \geq 1/2 \]
and
\[\PP_Y(\norm{Y-\EE(Y)}_2 \leq \fM) \leq 1/2 ~;\]
and we have
\begin{equation}\label{eq:nearlyConcentration}
\PP \left\{ \abs{\norm{Y-\EE(Y)}_2-\fM} > t \right\} \leq \exp \left( -\frac{t^2}{2 \sigma^2 L} \right) ~.
\end{equation}

In signal classification tasks, if we view signals in each class as realizations from a common distribution, then we have the same $\EE(Y)$ for all signals in this class. If the feature $Y$ generated by the CNN is concentrated around $\EE(Y)$, and $\EE(Y)$'s are separated for different classes, then features from different classes will naturally form clusters. Although we do not have exact concentration ($\fM=0$), Inequality (\ref{eq:nearlyConcentration}) demonstrates that $Y$ concentrates in a ``thin'' shell of radius $\fM$ around $\EE (Y)$ provided that we have a small Lipschitz bound $L$. We further promotes making the Lipschitz bound small in designing CNN's in the next section.

\section{Lipschitz bound in classification}\label{sec:classification}

In linear discriminant analysis (LDA) (see, e.g. \cite{XanPT13,Mika99}), it is desired to maximize the ``separation'', or the ``discriminant'', which is the variance between classes divided by the variance within each class (see \cite{Mika99}, Eq (1) and the discussion that follows). We use a similar notion in our (nonlinear discriminant) analysis, albeit its nature of nonlinearity. We define the \emph{discriminant} of two classes $C_1$ and $C_2$ to be
\begin{equation}\label{def:separation}
S = \frac{|||\EE [\Phi(f)|f \in C_1]-\EE[\Phi(f)|f \in C_2]|||^2}{\norm{\Var(\Phi(f)|f \in C_1)}_{*} + \norm{\Var(\Phi(f)|f \in C_2)}_{*}} ~,
\end{equation}
in which $\Phi$ is the nonlinear map induced by the CNN, as defined in (\ref{def:Phi}), $\norm{\cdot}_{*}$ denotes the nuclear norm, and $\Var$ denotes the covariance matrix.

To see how the Lipschitz bound is associated with the separation $S$, we look at the nature of the variance of the output feature $\Phi(f)$. Suppose we have a Gaussian noise $\nu \sim N(0,I)$ and apply a linear transform $A$, then $A \nu$ is also Gaussian with covariance $AA^{t}$. The nuclear norm of its variance is given by \[\norm{\Var(A\nu)}_{*} = \tr AA^{t} = \norm{A}_{\small{\Fr}}^2 ~,\] where $\norm{\cdot}_{\small{\Fr}}$ denotes the Frobenius norm. Since $A$ is linear, its Lipschitz constant is given by $\norm{A}_{\small{\op}}$ and its Lipschitz bound is given by $\norm{A}_{\small{\op}}^2$. Note that the Frobenious norm and the operator norm are equivalent norms, since $\norm{A}_{\op} \leq \norm{A}_{\Fr} \leq \sqrt{n} \norm{A}_{\op}$. 

Motivated by the linear case, we look into replacing $\norm{\Var(\cdot)}_{*}$ in (\ref{def:separation}) with the Lipschitz bound for general CNN's. We consider a CNN with a Gaussian white noise input $\nu \sim N(0, I)$. We assume two classes of signals, $C_1$ and $C_2$ where each class $C_c$ ($c = 1,2$) contains samples from a colored Gaussian noise $\nu_c \sim N(\mu_c, W_c W_c^t)$. We use $L_c$ to denote the Lipschitz bound for the whole system, as illustrated in Figure \ref{fig:lipclass}.

\begin{figure}[!ht]
\centering
\includegraphics[width=0.5\linewidth]{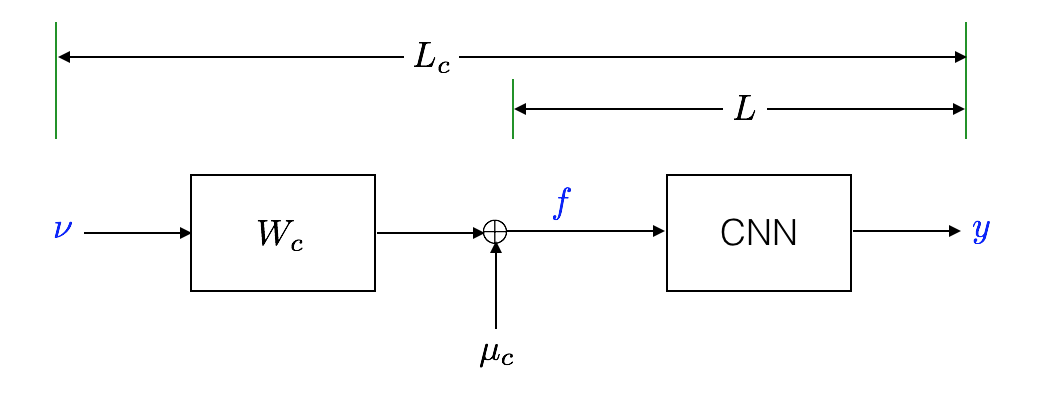}
\caption{Illustration of the Lipschitz bounds $L_c$. Suppose $f$ is an image filtered by $W_c$ (and a bias $\mu_c$) from a white Gaussian noise $\nu \sim N(0, I)$. Then the Lipschitz bound $L_c$ for the class $c$ considers both processes of $W_c$ and the CNN. This bound is not the same for different classes since it depends not only on the CNN but also on $W_c$.}
\label{fig:lipclass}
\end{figure}

We define the \emph{Lipschitz discriminant} to be
\begin{equation}\label{def:separationlip}
\tilde{S} = \frac{|||\EE [\Phi(f)|f \in C_1]-\EE[\Phi(f)|f \in C_2]|||^2}{L_1 + L_2} ~,
\end{equation}
where $L_1$ and $L_2$ are the Lipschitz bounds for Class 1 and Class 2, respectively. 

In Figure \ref{fig:3randn} -- \ref{fig:4rand}, we report the experiments on the discriminative behavior of randomly generated CNN's. We take two classes (number ``3'' and ``8'') of test images from the well-known MNIST database \cite{LecC10}, and randomly build CNN's with three or four convolutional layers and record their discriminant according to (\ref{def:separation}) (plotted on the left-hand-side in each figure) and (\ref{def:separationlip}) (plotted on the right-hand-side in each figure). We then train a linear SVM for each network and plot the error rate of classification against the discriminants. The purpose of this experiment is to show that smaller discriminants lead to better classification results. The reason we use SVM's is to examine the quality of the CNN (feature extractor) given different discriminants, and therefore we choose to train linear SVM's (which works for two classes) with the same regularization parameter. The numerical implementation is done using MATLAB 2016b. We use MatConvNet \cite{VedL15} for constructing the CNN, and the Machine Learning Toolbox in MATLAB for training the SVM's.

\begin{figure}[!ht]
\centering
\includegraphics[width=0.6\linewidth]{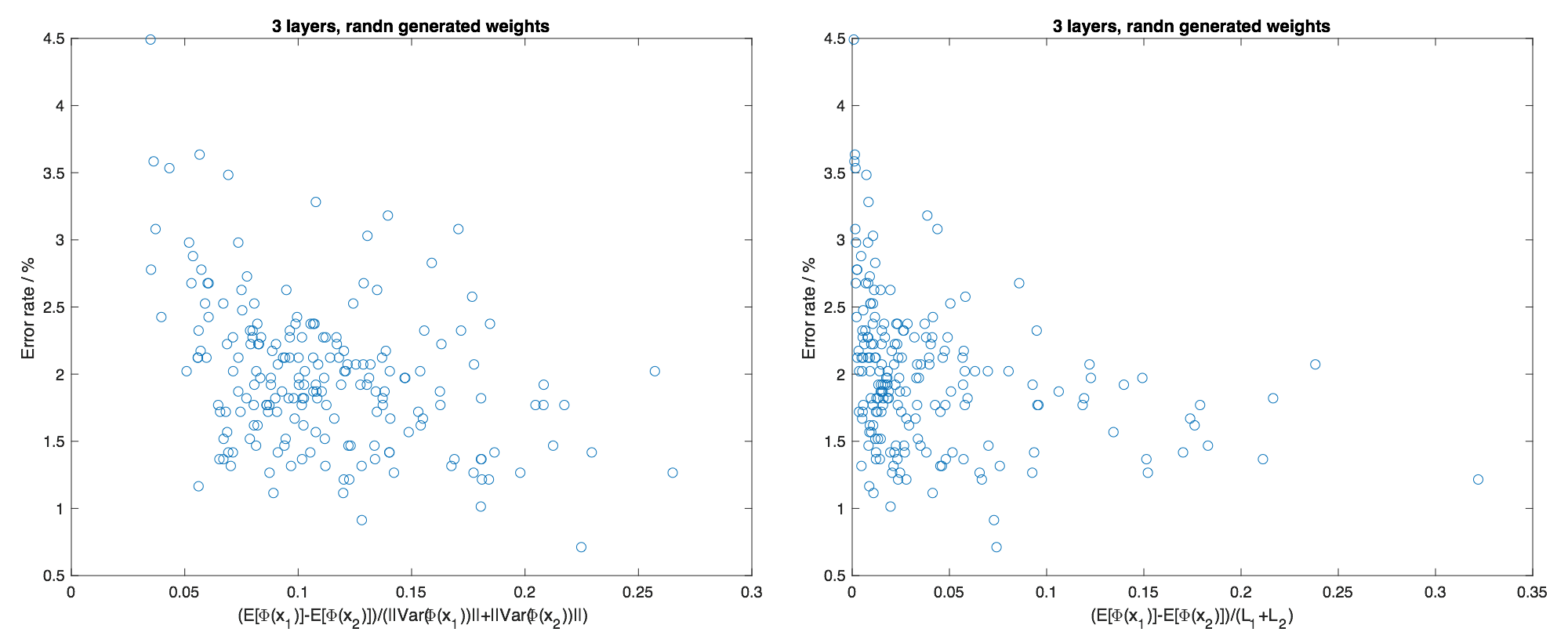}
\caption{Plots of error rate versus discriminant for a three-layer CNN with randomly (normal distributed) generated weights.}
\label{fig:3randn}
\end{figure}

\begin{figure}[!ht]
\centering
\includegraphics[width=0.6\linewidth]{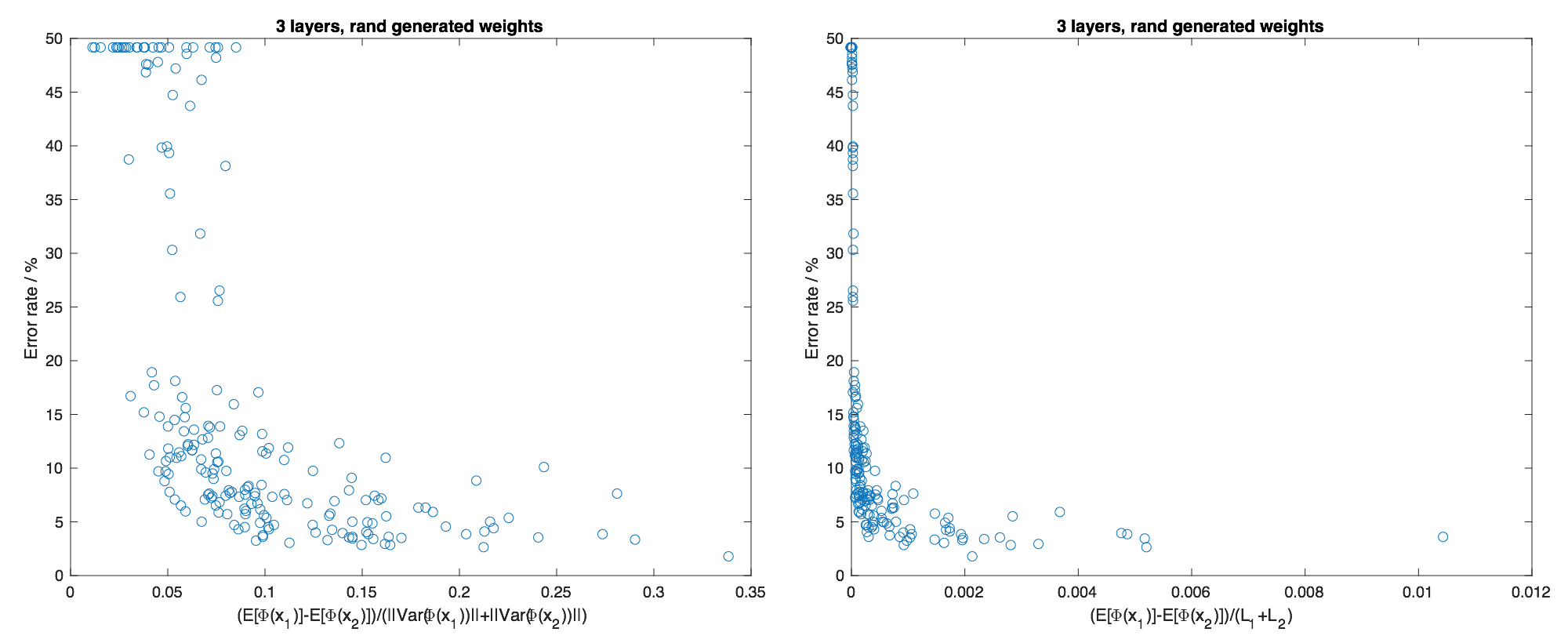}
\caption{Plots of error rate versus discriminant for a three-layer CNN with randomly (uniformly distributed) generated weights.}
\label{fig:3rand}
\end{figure}

\begin{figure}[!ht]
\centering
\includegraphics[width=0.6\linewidth]{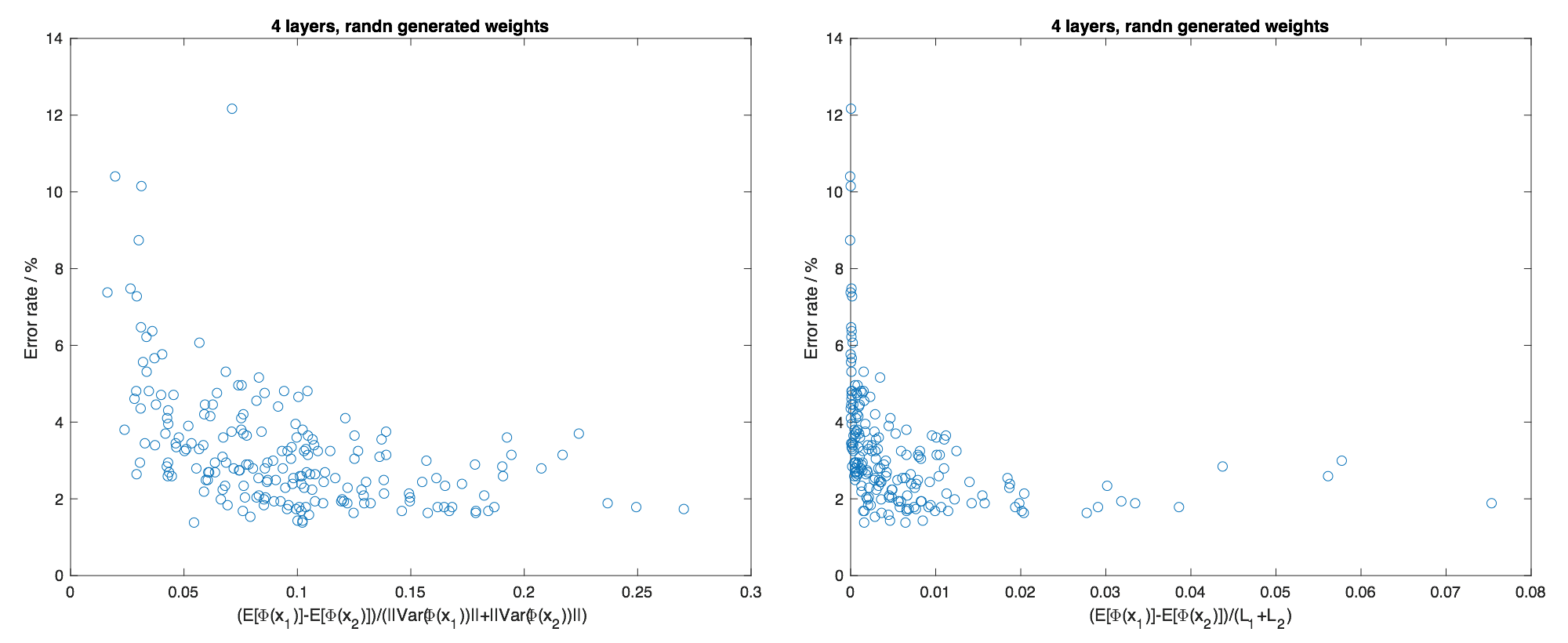}
\caption{Plots of error rate versus discriminant for a four-layer CNN with randomly (normal distributed) generated weights.}
\label{fig:4randn}
\end{figure}

\begin{figure}[!ht]
\centering
\includegraphics[width=0.6\linewidth]{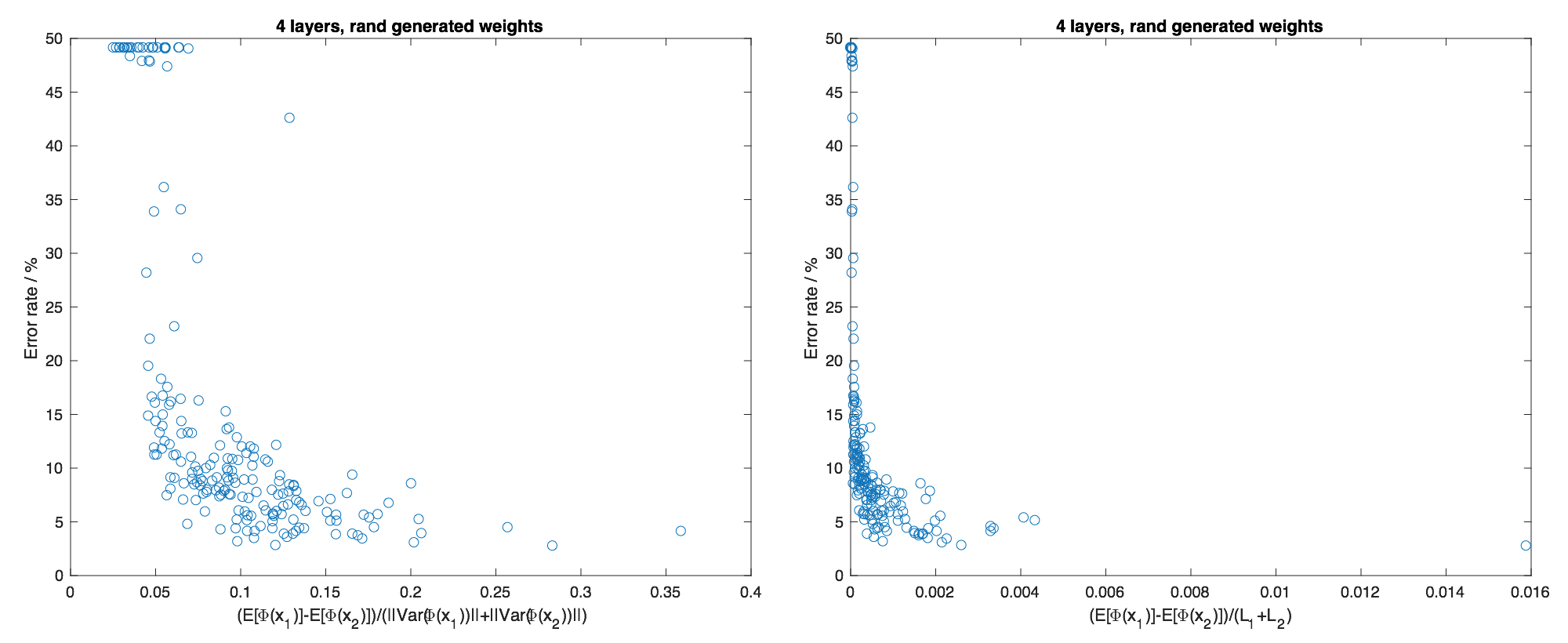}
\caption{Plots of error rate versus discriminant for a four-layer CNN with randomly (uniformly distributed) generated weights.}
\label{fig:4rand}
\end{figure}

As seen from the results, the error rate tends to decrease as the discriminant (\ref{def:separation}) and the Lipschitz discriminant (\ref{def:separationlip}) increase. The trend is clearer when we have more layers. Therefore, either the discriminant or the Lipschitz discriminant is a reasonable penalty term for the training objective function of the CNN. Our analysis in previous chapters can be effectively used to estimate the Lipschitz discriminant for these optimization problems. However, it remains open how to design a training algorithm using the discriminants since the weights appear in both the numerators and denominators in (\ref{def:separation}) and (\ref{def:separationlip}).

\section{Conclusion}\label{sec:conclusion}
In this paper we proposed a general framework for CNN's. We showed that the Lipschitz bound can be calculated by solving a linear program with the Bessel bounds for each layer. We also demonstrated that the Lipschitz bounds play a significant role in the second order statistical description of CNN's. Further, we illustrated that the Lipschitz bounds can be used to form a discriminant that works effectively in classification systems. From the numerical experiments in Section \ref{sec:examples}, we found interesting results that deserve future work. Future works on this topic include mathematically study the distribution of the local Lipschitz constants in a large deviation sense and how the empirical Lipschitz constants depend on the sample distribution. For those questions, the study will not be in a worst-case sense, and thus requires a framework that addresses more randomness.  

\section*{Acknowledgements}
DZ was partially supported by NSF Grant DMS-1413249. RB was partially supported by NSF Grant DMS-1413249, ARO
Grant W911NF-16-1-0008, and LTS Grant H9823031D00560049.  The authors thank the anonymous reviewers for their careful reading of the manuscript and constructive suggestions.

\appendices
\section{Proof of Theorem \ref{thm:lp}}\label{appendix:prooflip}
We are going to show that the optimal value for the linear program (\ref{eq:lp}) is a Lipschitz bound. In particular, we study $\sum_{N \in \cV} \norm{f_N-\tilde{f}_N}_2^2$ as $\sum_{m=1}^M \sum_{N \in \cV_m} \norm{f_N - \tilde{f}_N}_2^2$.

For the $m$-th layer, we mark the signals at the input nodes to be $h_{m,1}, \cdots, h_{m,n_m}$ and the signals at the output nodes to be $h'_{m,1}, \cdots, h'_{m,n'_m}$. We estimate the Lipschitz bound by comparing the output nodes and input nodes for each layer, and then derive a relation between the outputs and the input at the very first layer. Note that with our notation here, $h_{1,1} = f$ and $\tilde{h}_{1,1} = \tilde{f}$.

We first look at the case of no merging. Before we study the input-output relation, note that for the dilation operation illustrated in Figure \ref{fig:dilation}, for two outputs $y_0, \tilde{y}_0$ from inputs $y_1, \tilde{y}_1 \in \RR^d$ respectively, we have
\begin{figure}[!ht]
\centering
\includegraphics[width=0.3\linewidth]{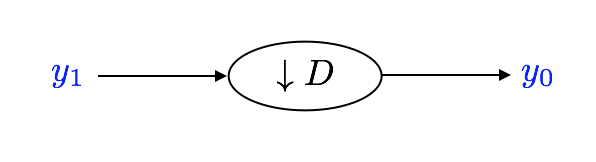}
\caption{The dilation operation. $y_1 \in \RR^d$ is the input and $y_0$ is the output given as $y_0(x) = y_1(Dx)$.}
\label{fig:dilation}
\end{figure}

\begin{equation}\label{eq:dilation}
\begin{aligned}
\norm{y_0-\tilde{y}_0}_2^2 ~=~ & \int \abs{y_1(Dx)-\tilde{y}_1(Dx)}^2 dx \\
~=~ & (\det D)^{-1} \norm{y_1-\tilde{y}_1}_2^2 ~.
\end{aligned}
\end{equation}

Now we look at the illustration in Figure \ref{fig:nomerge}. Since the nonlinearity $\sigma_{m,n'}$ is 1-Lipschitz, and also according to (\ref{eq:dilation}), we have
\[\norm{h'_{m,n'}-\tilde{h}'_{m,n'}}_2^2 \leq (\det D_{m,n'})^{-1} \norm{h^{\spadesuit}_{m,n'}-\tilde{h}^{\spadesuit}_{m,n'}}_2^2 ~.\]
Therefore,
\begin{equation}\label{eq:calBessel}
\begin{aligned}
& \sum_{n'=1}^{n'_m} \norm{h_{m,n'}-\tilde{h}_{m,n'}}_2^2 + \norm{f_{m,n}-\tilde{f}_{m,n}}_2^2 \\
~\leq~ & \sum_{n'=1}^{n'_m} (\det D_{m,n'})^{-1} \norm{h^{\spadesuit}_{m,n'}-\tilde{h}^{\spadesuit}_{m,n'}}_2^2 + \norm{f_{m,n}-\tilde{f}_{m,n}}_2^2 \\
~=~ & \sum_{n'=1}^{n'_m} (\det D_{m,n'})^{-1} \norm{\hat{h}^{\spadesuit}_{m,n'}-\tilde{\hat{h}}^{\spadesuit}_{m,n'}}_2^2 + \norm{\hat{f}_{m,n}-\tilde{\hat{f}}_{m,n}}_2^2 \\
~=~ & \sum_{n'=1}^{m_n'} \int \abs{ \left\{ \begin{bmatrix}
 \D^{(m)} \hat{T}^{(m)}(\omega)\\
 \hat{\Phi}^{(m)}(\omega)
\end{bmatrix} \left(\hat{h}^{(m)}-\tilde{\hat{h}}^{(m)}\right) \right\}_{n'} }^2 dw \\
~\leq~ & \left( \sup_{\omega \in \RR^d} \norm{ \begin{bmatrix}
 \D^{(m)} \hat{T}^{(m)}(\omega)\\
 \hat{\Phi}^{(m)}(\omega)
\end{bmatrix} }_{op}^2 \right) 
\left( \sum_{n=1}^{n_m} \norm{h_{m,n}-\tilde{h}_{m,n}}_2^2 \right) \\
~=~ & B^{(1)}_m \sum_{n=1}^{n_m} \norm{h_{m,n}-\tilde{h}_{m,n}}_2^2 ~,
\end{aligned}
\end{equation}
where in the last two steps, $\hat{h}^{(m)}$ is the column vector whose $n'$-th entry is $\hat{h}_{m,n'}$ (and similarly for $\tilde{\hat{h}}^{(m)}$), and $\{ \cdot \}_{n'}$ denotes the $n'$-th entry of a vector.

In the same manner, we have
\begin{equation*}
\sum_{n'=1}^{n'_m} \norm{h_{m,n'}-\tilde{h}_{m,n'}}_2^2 \leq B^{(2)}_m \sum_{n=1}^{n_m} \norm{h_{m,n}-\tilde{h}_{m,n}}_2^2 ~,
\end{equation*}
and
\begin{equation*}
\norm{f_{m,n}-\tilde{f}_{m,n}}_2^2 \leq B^{(3)}_m \sum_{n=1}^{n_m} \norm{h_{m,n}-\tilde{h}_{m,n}}_2^2 ~.
\end{equation*}

We have completed the analysis of one layer without merging. Now we focus on the merging case, in which the definition of the corresponding Bessel bounds will be clear immediately after we study the three types of merging. Now we look at the relation between the output and input of the merging blocks. 

\begin{figure}[!ht]
\centering
\includegraphics[width=0.3\linewidth]{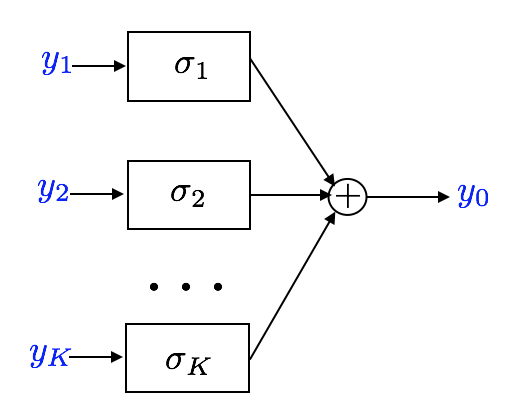}
\caption{Type I merging. $y_0$ is the sum of $\sigma_1(y_1), \cdots, \sigma_K(y_K)$.}
\label{fig:typeI}
\end{figure}

For Type I, as illustrated in Figure \ref{fig:typeI}, we have
\begin{equation}
y_0 = \sum_{k=1}^K \sigma_k (y_k) ~,
\end{equation}
and
\begin{equation}
\tilde{y}_0 = \sum_{k=1}^K \sigma_k (\tilde{y}_k) ~.
\end{equation}
Therefore
\begin{equation}
\begin{aligned}
\norm{y_0 - \tilde{y}_0}_2^2 ~=~ & \norm{\sum_{k=1}^K \sigma_k (y_k)-\sigma_k (\tilde{y}_k)}_2^2 \\
~\leq~ & K \sum_{k=1}^K \norm{\sigma_k (y_k)-\sigma_k (\tilde{y}_k)}_2^2 \\
~\leq~ & K \sum_{k=1}^K \norm{y_k-\tilde{y}_k}_2^2 ~.
\end{aligned}
\end{equation}

\begin{figure}[!ht]
\centering
\includegraphics[width=0.3\linewidth]{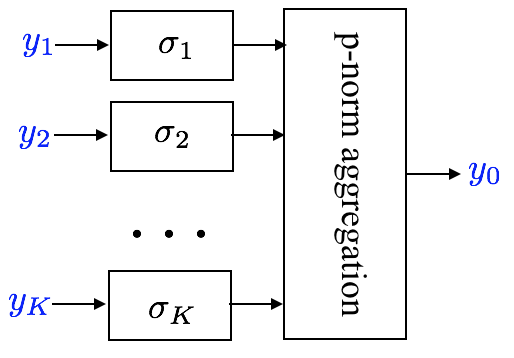}
\caption{Type II merging. $y_0$ is the aggregate of $\sigma_1(y_1), \cdots, \sigma_K(y_K)$ using $p$-norm.}
\label{fig:typeII}
\end{figure}

For Type II, as illustrated in Figure \ref{fig:typeII}, we have
\begin{equation}
y_0 = \left( \sum_{k=1}^K \abs{ \sigma_k ( y_k ) }^p \right)^{1/p},
\end{equation}
and
\begin{equation}
\tilde{y}_0 = \left( \sum_{k=1}^K \abs{ \sigma_k ( \tilde{y}_k ) }^p \right)^{1/p},
\end{equation}
Therefore if $p \leq 2$ we have
\begin{equation*}
\begin{aligned}
& \norm{y_0 - \tilde{y}_0}_2^2 \\
~=~ &  \Bigg\Vert \left( \sum_{k=1}^K \abs{ \sigma_k ( y_k ) }^p \right)^{1/p} - \left( \sum_{k=1}^K \abs{ \sigma_k ( \tilde{y}_k ) }^p \right)^{1/p} \Bigg\Vert_2^2 \\
~\leq~ & \Bigg\Vert \Big( \sum_{k=1}^K | \sigma_k (y_k) - \sigma_k ( \tilde{y}_k ) |^p \Big)^{1/p} \Bigg\Vert_2^2 \\
~\leq~ & K^{2/p-1} \cdot \Bigg\Vert \Big( \sum_{k=1}^K | \sigma_k (y_k) - \sigma_k ( \tilde{y}_k ) |^2 \Big)^{1/2} \Bigg\Vert_2^2 \\
~=~ & K^{2/p-1} \cdot \sum_{k=1}^K \norm{\sigma_k(y_k)-\sigma_k(\tilde{y}_k)}_2^2 \\
~\leq~ & K^{2/p-1} \cdot \sum_{k=1}^K \norm{y_k-\tilde{y}_k}_2^2 ~;
\end{aligned}
\end{equation*}
and if $p > 2$ we have
\begin{equation*}
\begin{aligned}
& \norm{y_0 - \tilde{y}_0}_2^2 \\
~=~ & \Bigg\Vert \left( \sum_{k=1}^K \abs{ \sigma_k ( y_k ) }^p \right)^{1/p} - \left( \sum_{k=1}^K \abs{ \sigma_k ( \tilde{y}_k ) }^p \right)^{1/p} \Bigg\Vert_2^2 \\
~\leq~ & \Bigg\Vert \Big( \sum_{k=1}^K | \sigma_k (y_k) - \sigma_k ( \tilde{y}_k ) |^p \Big)^{1/p} \Bigg\Vert_2^2 \\
~\leq~ & \Bigg\Vert \Big( \sum_{k=1}^K | \sigma_k (y_k) - \sigma_k ( \tilde{y}_k ) |^2 \Big)^{1/2} \Bigg\Vert_2^2 \\
~=~ & \sum_{k=1}^K \norm{\sigma_k(y_k)-\sigma_k(\tilde{y}_k)}_2^2 \\
~\leq~ & \sum_{k=1}^K \norm{y_k-\tilde{y}_k}_2^2 ~.
\end{aligned}
\end{equation*}

For Type III, as illustrated in Figure \ref{fig:typeIII}, we have
$y_0 = \prod_{k=1}^K \sigma_k (y_k)$
and
$\tilde{y}_0 = \prod_{k=1}^K \sigma_k (\tilde{y}_k)$.
Therefore,
\begin{equation*}
\begin{aligned}
& \norm{y_0 - \tilde{y}_0}_2 \\
~=~ & \norm{\prod_{k=1}^K \sigma_k(y_k) - \prod_{k=1}^K \sigma_k(\tilde{y}_k)}_2 \\
~=~ & \Bigg\Vert \prod_{k=1}^K \sigma_k(y_k) + \sum_{J=1}^{K-1} \Big[ - \prod_{k=1}^J \sigma_k(y_k) \prod_{k=J+1}^K \sigma_k(\tilde{y}_k)+ \\
& \qquad \prod_{k=1}^J \sigma_k(y_k) \prod_{k=J+1}^K \sigma_k(\tilde{y}_k) \Big] + \prod_{k=1}^K \sigma_k(\tilde{y}_k) \Bigg\Vert_2 \\
~=~ & \Bigg\Vert \prod_{k=1}^{K-1} \sigma_k(y_k) \cdot (\sigma_K(y_K)-\sigma_K(\tilde{y}_K)) + \sum_{J=2}^{K-1} \prod_{k=1}^{J-1} \sigma_k(y_k) \cdot \\
& \qquad (\sigma_{J}(y_{J})-\sigma_{J}(\tilde{y}_{J})) \cdot \prod_{k=J+1}^K \sigma_k(\tilde{y}_k) + \\
& \qquad (\sigma_1(y_1)-\sigma_1(\tilde{y}_1)) \cdot \prod_{k=2}^K \sigma_k(\tilde{y}_k) \Bigg\Vert_2 \\
~\leq~ & \prod_{k=1}^{K-1} \norm{\sigma_k(y_k)}_{\infty} \cdot \norm{\sigma_K(y_K)-\sigma_K(\tilde{y}_K)}_2 + \\
& \qquad \sum_{J=2}^{K-1} \prod_{k=1}^{J-1} \norm{\sigma_k(y_k)}_{\infty} \cdot \prod_{k=J+1}^K \norm{\sigma_k(\tilde{y}_k)}_{\infty} \cdot \\
& \qquad \norm{\sigma_J(y_J)-\sigma_J(\tilde{y}_J)}_2 + \\
& \qquad \prod_{k=2}^K \norm{\sigma_k(\tilde{y}_k)}_{\infty} \cdot \norm{\sigma_1(y_1)-\sigma_1(\tilde{y}_1)}_2 \\
~\leq~ & \sum_{k=1}^K \norm{\sigma_k(y_k)-\sigma_k(\tilde{y}_k)}_2 \\
~\leq~ & \sum_{k=1}^K \norm{y_k-\tilde{y}_k}_2 ~,
\end{aligned}
\end{equation*}
and thus
\begin{equation}
\norm{y_0-\tilde{y}_0}_2^2 \leq K \sum_{k=1}^K \norm{y_k-\tilde{y}_k}_2^2 ~.
\end{equation}

\begin{figure}[!ht]
\centering
\includegraphics[width=0.3\linewidth]{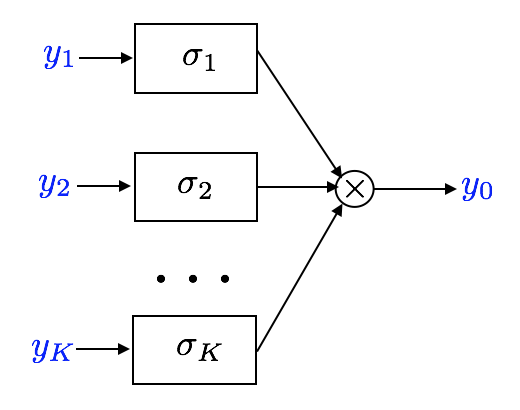}
\caption{Type III merging. $y_0$ is the product of $\sigma_1(y_1), \cdots, \sigma_K(y_K)$. Here $\norm{\sigma_j}_{\infty} \leq 1$ for $j = 1, \cdots, K$.}
\label{fig:typeIII}
\end{figure}


Therefore, when we compare the input nodes and output nodes of the $m$-th layer for the merging case, using the above relations and the definition of $B_m^{(1)}$, we have (see Figure \ref{fig:onelayerdetail})
\begin{equation}
\begin{aligned}
& \sum_1^{n'_m} \norm{h'_{m,n}-\tilde{h}'_{m,n}}_2^2 + \sum_{n=1}^{n_m} \norm{f_{m,n}-f'_{m,n}}_2^2 \\
~\leq~ & B_m^{(1)} \norm{h_{m,n}-\tilde{h}_{m,n}}_2^2 ~.
\end{aligned}
\end{equation}

By the one-one correspondence of the output nodes in the $(m+1)$-th layer and the input nodes in the $m$-th layer, we know that
\begin{equation}
\sum_{n=1}^{n_{m+1}} \norm{h_{m+1,n}-\tilde{h}_{m+1,n}}_2^2 = \sum_{n=1}^{n'_m} \norm{h'_{m,n}-\tilde{h}'_{m,n}}_2^2 ~,
\end{equation}
and therefore,
\begin{equation}
\label{eq:b1}
\begin{aligned}
& \sum_{n=1}^{n_{m+1}} \norm{h_{m+1,n}-\tilde{h}_{m+1,n}}_2^2 + \sum_{n=1}^{n_m} \norm{f_{m,n}-\tilde{f}_{m,n}}_2^2 \\
~\leq~ & B_m^{(1)} \sum_{n=1}^{n_m} \norm{h_{m,n}-\tilde{h}_{m,n}}_2^2 ~,
\end{aligned}
\end{equation}
for $1 \leq m \leq M-1$.

If we do not consider the output generating, then the forward propagation relation is 
\begin{equation}
\label{eq:b2}
\sum_{n=1}^{n_m} \norm{h_{m+1,n}-\tilde{h}_{m+1,n}}_2^2 \leq B_m^{(2)} \sum_{n=1}^{n_m} \norm{h_{m,n}-\tilde{h}_{m,n}}_2^2 ~,
\end{equation}
for $1 \leq m \leq M-1$,
and similarly, considering the output generating nodes alone gives
\begin{equation}
\label{eq:b3}
\sum_{n=1}^{n_m} \norm{f_{m,n}-\tilde{f}_{m,n}}_2^2 \leq B_m^{(3)} \sum_{n=1}^{n_m} \norm{h_{m,n}-\tilde{h}_{m,n}}_2^2 ~,
\end{equation}
for $1 \leq m \leq M$.

Since we would like to compare $\sum_{m=1}^M \sum_{n=1}^{n_m} \norm{f_{m,n}-\tilde{f}_{m,n}}_2^2$ with $\norm{h_{1,1}-\tilde{h}_{1,1}}_2^2$, by (\ref{eq:b1})-(\ref{eq:b3}), we see that the maximal value of the linear program (\ref{eq:lp}) gives a Lipschitz bound. 


\section{Proof of Corollary \ref{thm:prod}}\label{appendix:prooflipcor}

From the definitions of $B_{m,n}^{(1)}$, $B_{m,n}^{(2)}$ and $B_{m,n}^{(3)}$ (\ref{def:b1mn})-(\ref{def:b3mn}) it is obvious that 
\begin{equation}
B_{m,n}^{(1)} \leq B_{m,n}^{(2)} + B_{m,n}^{(3)}
\end{equation}
and from (\ref{def:b1m})-(\ref{def:b3m}), as well as (\ref{def:b1mnm})-(\ref{def:b3mnm}), we have hence
\begin{equation}
B_m^{(1)} \leq B_m^{(2)} + B_m^{(3)}
\end{equation}
for each $m$. Then note that if $\{y_m\}_{m=0}^{M-1}$ and $\{z_m\}_{m=0}^{M-1}$ are the maximums of the linear program (\ref{eq:lp}), then
\begin{equation}
z_m \leq B_{m}^{(1)} y_{m-1} - y_m, \qquad 1 \leq m \leq M-1,
\end{equation}
and
\begin{equation}
z_M \leq B_{M}^{(1)} y_{M-1}
\end{equation}
(note that $B_M^{(1)} = B_M^{(3)}$).

We take the sum over all $m$'s to get (denote $y_M = 0$)
\begin{equation}
\begin{aligned}
\sum_{m=1}^M z_m 
~\leq~ & \sum_{m=1}^M B_m^{(1)} y_{m-1} - y_m \\
~=~ & \sum_{m=0}^{M-1} B_{m+1}^{(1)} y_m - \sum_{m=1}^{M-1} y_m \\
~=~ & B_1^{(1)} + \sum_{m=1}^{M-1} (B_{m+1}^{(1)}-1) y_m ~.
\end{aligned}
\end{equation}
Also, $y_m \leq B_m^{(2)} y_{m-1}$ implies $y_m \leq B_m^{(1)} y_{m-1}$, so
\begin{equation}
\begin{aligned}
\sum_{m=1}^M z_m
~\leq~ & B_1^{(1)} + \sum_{m=1}^{M-1} (\max\{1,B_{m+1}^{(1)}\} - 1) \cdot \\
& \prod_{m'=1}^m \max\{1,B_{m'}^{(1)}\} \\
~=~ & \prod_{m=1}^M \max\{1,B_m^{(1)}\} ~.
\end{aligned}
\end{equation}

\section{The Banach Algebra (\ref{def:balg})}\label{appendix:banach}

We first show that we indeed have a Banach algebra in (\ref{def:balg}).
\begin{lemma}
$\cB$ as defined in (\ref{def:balg}) is a Banach algebra, where the $+$ operation is pointwise addition, and the $\cdot$ operation is the convolution defined by
\begin{equation}\label{def:convbalg}
f \ast g = \left( \hat{f} \hat{g} \right)^{\mathsf{v}} ~,
\end{equation}
where ``$~^\mathsf{v}~$'' denotes the inverse Fourier transform. 
\end{lemma}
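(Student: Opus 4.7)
The plan is to transport the entire algebraic and topological structure of $\cB$ to the Fourier side, where everything reduces to the well-known Banach algebra $(L^\infty(\RR^d), +, \cdot)$ with pointwise multiplication. Specifically, I would equip $\cB$ with the norm $\|g\|_\cB := \|\hat g\|_\infty$ (which is finite by definition of $\cB$), and observe that the Fourier transform $\cF: \cB \to L^\infty(\RR^d)$, $g \mapsto \hat g$, is a well-defined linear map. The plan is to show it is a linear isometric bijection onto $L^\infty(\RR^d)$, and then transfer completeness, submultiplicativity and the algebra axioms back to $\cB$ along this identification.

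Concretely, I would proceed in the following order. First, linearity and injectivity of $\cF$ on $\cB$ are immediate from the standard Fourier transform on $\cS'(\RR^d)$. For surjectivity, given any $u \in L^\infty(\RR^d) \subset \cS'(\RR^d)$, the inverse Fourier transform $u^\mathsf{v} \in \cS'(\RR^d)$ satisfies $(u^\mathsf{v})\widehat{\ } = u$, so $u^\mathsf{v} \in \cB$. Hence $\cF$ is a bijection and, by construction, an isometry onto $(L^\infty(\RR^d), \|\cdot\|_\infty)$, so $\cB$ inherits completeness from $L^\infty(\RR^d)$ and is therefore a Banach space. Second, for $f, g \in \cB$ the pointwise product $\hat f \hat g$ lies in $L^\infty(\RR^d)$ with $\|\hat f \hat g\|_\infty \leq \|\hat f\|_\infty \|\hat g\|_\infty$, so the definition $f \ast g := (\hat f \hat g)^\mathsf{v}$ yields a well-defined element of $\cB$ and immediately delivers submultiplicativity $\|f \ast g\|_\cB \leq \|f\|_\cB \|g\|_\cB$. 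Third, associativity, commutativity, and bilinearity/distributivity of $\ast$ are inherited for free from the corresponding pointwise properties of multiplication in $L^\infty(\RR^d)$.

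The step that requires the most care is making sure the convolution \eqref{def:convbalg} is genuinely well-defined in $\cS'(\RR^d)$ and agrees with the classical convolution whenever that is available (for instance when $f, g \in L^1 \cap \cB$, or when one factor is a Schwartz function). The subtlety is that convolution of arbitrary tempered distributions is not defined in general, and here we are effectively \emph{using} the Fourier-side formula as the primary definition. The argument is to note that $\hat f \hat g \in L^\infty(\RR^d) \subset \cS'(\RR^d)$ is an unambiguous product (pointwise a.e.\ multiplication of an $L^\infty$ function by another $L^\infty$ function), and then to invoke the fact that the inverse Fourier transform $\cF^{-1}: \cS'(\RR^d) \to \cS'(\RR^d)$ is a continuous bijection, so $(\hat f \hat g)^\mathsf{v}$ is an honest tempered distribution. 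Consistency with the classical convolution on dense subclasses (e.g.\ Schwartz functions) then follows from the usual convolution theorem $\widehat{f \ast g} = \hat f \hat g$, which shows that our definition extends the classical one in a norm-continuous way.

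Finally I would remark that the Dirac distribution $\delta$ belongs to $\cB$ since $\hat\delta \equiv 1 \in L^\infty(\RR^d)$, and it serves as a multiplicative identity because $\hat\delta \hat g = \hat g$ for every $g \in \cB$, so $\cB$ is in fact a unital commutative Banach algebra. This is not required by the statement but is a natural by-product of the identification with $L^\infty(\RR^d)$ and will be useful later when filters are combined with identity-like operations in the CNN framework.
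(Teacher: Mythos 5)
Your proposal is correct and follows essentially the same route as the paper: both arguments transport everything to the Fourier side, where closedness under $\ast$ reduces to $\hat f\hat g\in L^\infty(\RR^d)\subset\cS'(\RR^d)$ and the Banach-algebra inequality reduces to $\norm{\hat f\hat g}_{\infty}\leq\norm{\hat f}_{\infty}\norm{\hat g}_{\infty}$. Your version is in fact somewhat more complete than the paper's, since you also verify that $\cB$ is a Banach space (completeness via the isometric bijection with $L^\infty(\RR^d)$) and exhibit the unit $\delta$; the only minor inaccuracy is the side remark about Schwartz functions being dense in $\cB$ (they are not, since $\cB\cong L^\infty$), but consistency with the classical convolution on $L^1\cap\cB$ follows directly from the convolution theorem without any density argument.
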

\begin{proof}
Note that $\cB$ is closed under the convolution in the sense of (\ref{def:convbalg}) because $\hat{f} \hat{g} \in L^{\infty}(\RR^d)$ and therefore is also in $\cS'(\RR^d)$.
Since the Fourier transform is an isomorphism on $\cS'(\RR^d)$, the inverse Fourier transform of $\hat{f} \hat{g}$ also lies in $\cS'(\RR^d)$.

After the closedness is clear, it is trivial to check that $\cB$ is indeed an algebra. The fact that $\cB$ is a Banach algebra is due to the norm inequality
\begin{equation}
\norm{\hat{f}\hat{g}}_{\infty} \leq \norm{\hat{f}}_{\infty} \Big\Vert \hat{g} \Big\Vert_{\infty} ~.
\end{equation}
\end{proof}

The definition of the Banach Algebra becomes natural after the Bessel bounds (\ref{def:b1mn})-(\ref{def:b3mn}) are defined. Of course, in practice we can consider only filters lie in the space $L^1(\RR^d)$. The Banach Algebra (\ref{def:balg}) is a larger space, and it also has some practical consideration. Suppose we have a network where there is aggregation of two layers, then we notice that this does not fall in our general model. Nevertheless, we can add several layers of $\delta$-function, to make it fall in our framework. This is illustrated in Figure \ref{fig:equiv}.

In the definition (\ref{def:balg}), the $L^{\infty}$ norm is considered in the usual sense, that is, we only consider $\hat{f}$ to be a well-defined ordinary function in $L^{\infty}(\RR^d)$. Then the convolution operation should be understood as $f \ast g = (\hat{f} \cdot \hat{g})^{\mathsf{v}}$. Then obviously the Banach Algebra $\cB$ is closed and well-defined under the convolution operation.

Under this definition, if we don't choose a smooth (in the frequency domain) filter, then in the signal domain we do not have good decay and it is possible to have infinite $L^1$ norm. Even if we choose signals whose Fourier transform is in $C_c^{\infty}(\RR^d)$, we have a coarse approximation by using Young's inequality. Details can be seen in the example given in \cite{BSZ17}.

\begin{figure}[!ht]
\centering
\includegraphics[width=0.5\linewidth]{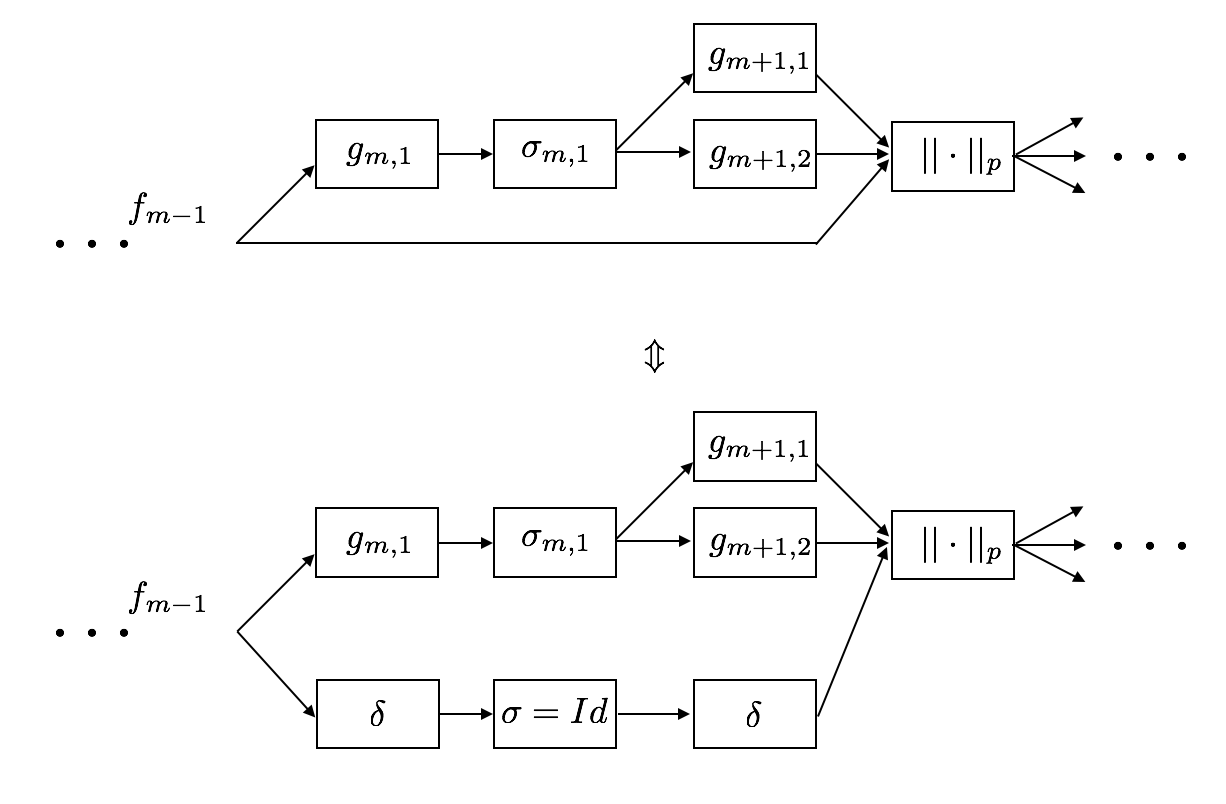}
\caption{Use $\delta$ function to equivalently represent a CNN.}
\label{fig:equiv}
\end{figure}

\section{Lipschitz constants and local Lipschitz constants}\label{appendix:localLip}
For CNN's such as the AlexNet and the GoogleNet, the Lipschitz constant is the maximum among all the local Lipschitz constants (see Section \ref{subsec:alexgoogle}). In particular, we have the following result.

\begin{proposition}
Let $\Phi: \cD \rightarrow \cR$ be a Lipschitz continuous function on a compact convex domain $\cD \in \RR^D$ with the Lipschitz constant
\[ L_c := \max_{\substack{f,g \in \cD \\ f \neq g}} \frac{|||\Phi(f) - \Phi(g)|||}{\norm{f - g}_2} ~,\]
where $|||\cdot|||$ is a well-defined norm on $\cR$. Suppose the local Lipschitz constant at $f \in \cD$ for some $\epsilon > 0$ is $L^{\textup{loc}}(f, \epsilon)$ as defined in (\ref{def:localLip}). Then
\begin{equation}\label{eq:globalIsTheMaxOfAllLocal}
L_c = \max_{f \in \cD} L^{\textup{loc}}(f, \epsilon) ~.
\end{equation}
\end{proposition}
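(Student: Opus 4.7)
The plan is to prove the equality by establishing both directions. The inequality $L_c \geq \max_{f \in \cD} L^{\textup{loc}}(f, \epsilon)$ is immediate from the definitions: for any $f \in \cD$, every pair $(f, f')$ with $f' \in \cD$ and $\norm{f'-f}_2 < \epsilon$ is a legitimate competitor for the ratio defining $L_c$, so $L^{\textup{loc}}(f, \epsilon) \leq L_c$, and maximizing over $f$ preserves this bound.

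For the nontrivial direction $L_c \leq \max_{f \in \cD} L^{\textup{loc}}(f, \epsilon)$, I would exploit the convexity of $\cD$ via a chaining argument along the line segment connecting two arbitrary points. Given $f, g \in \cD$ with $f \neq g$, pick an integer $n$ large enough that $\norm{f-g}_2/n < \epsilon$, and set $p_k = f + (k/n)(g-f)$ for $k = 0, 1, \ldots, n$. Convexity ensures each $p_k \in \cD$, and consecutive points satisfy $\norm{p_{k+1}-p_k}_2 = \norm{f-g}_2/n < \epsilon$. Writing $L^{\max} = \max_{h \in \cD} L^{\textup{loc}}(h, \epsilon)$, the definition of the local Lipschitz constant yields $|||\Phi(p_{k+1}) - \Phi(p_k)||| \leq L^{\max} \cdot \norm{p_{k+1}-p_k}_2$ for each $k$. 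A telescoping triangle inequality then gives
\[
|||\Phi(g) - \Phi(f)||| \leq \sum_{k=0}^{n-1} |||\Phi(p_{k+1})-\Phi(p_k)||| \leq L^{\max} \cdot \norm{g-f}_2,
\]
and dividing by $\norm{g-f}_2$ and taking the supremum over distinct $f, g \in \cD$ gives $L_c \leq L^{\max}$, completing the proof.

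Rather than a genuine obstacle, the main subtlety is justifying that the outer supremum $\max_{f \in \cD} L^{\textup{loc}}(f, \epsilon)$ is actually attained. Since $\cD$ is compact one would like to invoke upper semicontinuity of $f \mapsto L^{\textup{loc}}(f, \epsilon)$; if this is not assumed, one should technically replace $\max$ with $\sup$, but the chaining argument above is unaffected by this replacement since it only uses the bound $L^{\textup{loc}}(p_k, \epsilon) \leq L^{\max}$ pointwise. The case $f = g$ requires no separate treatment because it is excluded from the definition of $L_c$, and the strict inequality $\norm{p_{k+1}-p_k}_2 < \epsilon$ (rather than $\leq$) is obtained by choosing $n$ strictly greater than $\norm{f-g}_2/\epsilon$.
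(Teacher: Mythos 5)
Your proof is correct and follows essentially the same route as the paper's: subdivide the segment joining $f$ and $g$ (using convexity of $\cD$) into pieces of length less than $\epsilon$, apply the local Lipschitz bound on each piece, and telescope via the triangle inequality. The only differences are cosmetic --- you argue directly rather than by contradiction, and you explicitly record the easy inequality $L_c \geq \max_f L^{\textup{loc}}(f,\epsilon)$, which the paper leaves implicit.
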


\begin{proof}
Assume on the contrary that (\ref{eq:globalIsTheMaxOfAllLocal}) is not true. Then $L_c >  \max_{f \in \cD} L^{\textup{loc}}(f, \epsilon)$. Suppose $L_c = 2 \delta  + \max_{f \in \cD} L^{\textup{loc}}(f, \epsilon)$. Then there exists $f, g \in \cD$ for which
\begin{equation}\label{assumptionCuoDe}
\frac{|||\Phi(f) - \Phi(g)|||}{\norm{f - g}_2} > \delta  + \max_{f \in \cD} L^{\textup{loc}}(f, \epsilon) ~.
\end{equation}
Let $I = \{ h ~|~ h = (1-t)f + tg , 0 \leq t \leq 1 \} \subset \cD$ be the line segment that joins $f$ and $g$. Take \[I' = \{ h ~|~ h = (1-t)f + tg ,~ t = 0, \frac{\epsilon}{2} , \epsilon, \frac{3 \epsilon}{2} , \cdots , \frac{\epsilon}{2}\left \lfloor \frac{2}{\epsilon} \right \rfloor, 1 \}~.\] Let $N = \abs{I'}$ denote the number of elements in $I'$. Let $h_n = (1-\frac{\epsilon}{2})f + \frac{\epsilon}{2}g$ for $n = 1, \cdots, N-1$ and $h_N = g$. Then since $\norm{h_n - h_{n+1}}_2 \leq \epsilon $,we have
\[
|||\Phi(h_n) - \Phi(h_{n+1})||| \leq L^{\textup{loc}}(h_n, \epsilon) \cdot \norm{h_n - h_{n+1}}_2, \qquad n = 1, 2, \cdots, N-1 ~.
\]
But $L^{\textup{loc}}(h_n, \epsilon) \leq \max_{f \in \cD} L^{\textup{loc}}(f, \epsilon)$, so we have
\[
|||\Phi(h_n) - \Phi(h_{n+1})||| \leq \max_{f \in \cD} L^{\textup{loc}}(f, \epsilon) \cdot \norm{h_n - h_{n+1}}_2, \qquad n = 1, 2, \cdots, N-1 ~.
\]
Summing over $n = 1, 2, \cdots, N-1$ and applying the triangle inequality for norms, we have 
\[
|||\Phi(f) - \Phi(g)||| \leq \sum_{n=1}^{N-1}  |||\Phi(h_n) - \Phi(h_{n+1})|||  \leq \max_{f \in \cD} L^{\textup{loc}}(f, \epsilon) \cdot \sum_{n=1}^{N-1} \norm{h_n - h_{n+1}}_2 = \max_{f \in \cD} L^{\textup{loc}}(f, \epsilon) \norm{f-g}_2 ~,
\]
where the last equality come from the fact that $h_n$'s are all on the same line. But this implies \[  \frac{|||\Phi(f) - \Phi(g)|||}{\norm{f - g}_2} \leq \max_{f \in \cD} L^{\textup{loc}}(f, \epsilon) ~, \]
which contradicts (\ref{assumptionCuoDe}). Therefore the assumption cannot be true and we conclude with (\ref{eq:globalIsTheMaxOfAllLocal}).
\end{proof}

\section{Proof of Lemma \ref{lem:sss}}\label{appendix:lemsss}
The proof of Lemma \ref{lem:sss} lies on the following two facts.\\
1) If $X$ is SSS, then $\sigma(X(t))$, where $\sigma$ is a pointwise function, is also SSS;\\
2) If $X$ is SSS, then $X \ast g(t)$ defined as
\begin{equation}
(X \ast g)_{\omega}(t) = \int X_{\omega}(t-s)g(s) ds ~,
\end{equation}
is also SSS.
To see 1), we need to show
\begin{equation}
\begin{aligned}
& \PP \Big\{ \sigma(X_{t_1+\tau}) \in A_1, \cdots, \sigma(X_{t_n+\tau}) \in A_n \Big\} \\
~=~ & \PP \Big\{ \sigma(X_{t_1}) \in A_1, \cdots, \sigma(X_{t_n}) \in A_n \Big\}
\end{aligned}
\end{equation}
for any $t_1, \cdots, t_n, \tau \in \RR^d$ and any $A_1, \cdots, A_n \in \fF$.
Let $B_j = \sigma^{-1} (A_j) = \{c \in \CC: \sigma(c) \in A_j \}$ for $j = 1, \cdots, n$. The above equality reads
\begin{equation}
\begin{aligned}
& \PP \Big\{ X_{t_1+\tau} \in B_1, \cdots, X_{t_n+\tau} \in B_n \Big\} \\
~=~ & \PP \Big\{ X_{t_1} \in B_1, \cdots, X_{t_n} \in B_n \Big\} ~,
\end{aligned}
\end{equation}
which holds true due to the assumption that $X$ is SSS.\\
To see 2, note that since $X$ is SSS there exists a semigroup of measure-preserving transformation
\[ \left\{ T^t: \Omega \rightarrow \Omega \right\}_{t \in \RR^d} \]
associated with $X$ such that
\[T^s T^t = T^{s+t}\]
for each $s, t \in \RR^d$; and a function $f$ such that
\begin{equation}
f(T^t \omega) = X_t(\omega) ~,
\end{equation}
for each $\omega \in \Omega$, $t \in \RR^d$.
Thus
\begin{equation}
X \ast g(t) = \int f \left( T^{t-s} \omega \right) g(s) ds ~.
\end{equation}
For any $t_1, \cdots, t_n \in \RR^d$, $A_1, \cdots, A_n \in \fF$, let
\begin{equation}
\tilde{\Omega}_{\tau} = \left\{ \omega \in \Omega: (X \ast g)_{t_1+\tau}(\omega) \in A_1, \cdots, (X \ast g)_{t_n+\tau}(\omega) \in A_n \right\} ~.
\end{equation}
For $\omega \in \tilde{\Omega}_{\tau}$, note that $T^{\tau} \omega$ satisfies
\[ (X \ast g)_{t_1}(\omega) \in A_1, \cdots, (X \ast g)_{t_n}(\omega) \in A_n ~.\] 
Since $T^{\tau}$ is measure-preserving, we have $\PP(\tilde{\Omega}_{\tau}) = \PP(\tilde{\Omega}_0)$. Thus $X \ast g$ is SSS.

Given the two facts and that there is no dilation, Lemma \ref{lem:sss} is proved by tracking from the input to each output of the CNN.

\section{Proof of Theorem \ref{thm:sss}}\label{appendix:stationary}
Since the input $X$ and $Y$ are SSS, so are the signals at all input and output nodes of the CNN. Therefore we can apply the Wiener-Khinchin Theorem to relate the auto-correlation with the power spectrum.

Consider an SSS process $Z$ that are filtered by some fixed $g \in \cB$. Denote $W = Z \ast g$. Then we have $R_W(0) = \int \hat{S}_W(\omega) d\omega$. Note that we have the transfer relation
\begin{equation}
\hat{S}_W (\omega) = \hat{S}_Z (\omega) \cdot \abs{\hat{g} (\omega)}^2 ~.
\end{equation}
That is to say,
\begin{equation}
\EE \left( \abs{W}^2 \right) = \int \hat{R}_W (\omega) \abs{\hat{g}(\omega)}^2 d\omega ~.
\end{equation}
More generally, due to linearity of $\EE$, if we have two inputs $Z$ and $\tilde{Z}$ and a family of filters $\{g_j\}_{j \in J}$, we have
\begin{equation}
\begin{aligned}
&\EE \left( \sum_j \Big| Z \ast g_j  - \tilde{Z} \ast g_j \Big|^2 \right) \\
~=~ & \sum_j \int \hat{S}_{Z-\tilde{Z}}(\omega)  \abs{\hat{g}_j (\omega)}^2 d\omega \\
~=~ & \int \hat{S}_{Z-\tilde{Z}} (\omega) \sum_j \abs{\hat{g}_j}^2 (\omega) d\omega \\
~\leq~ & \int \hat{S}_{Z-\tilde{Z}} (\omega) d\omega \cdot \norm{ \sum_j \abs{\hat{g}_j}^2 }_{\infty} \\
~=~ & \EE \left( \abs{Z-\tilde{Z}}^2 \right) \cdot \norm{ \sum_j \abs{\hat{g}_j}^2 }_{\infty} ~.
\end{aligned}
\end{equation}
With this, we can compare the correlation on the first input nodes with the outputs of the CNN similar to what we did in the proof of Theorem \ref{thm:lp}. Note that for merging, the inequalities still hold when $\norm{\cdot}_2^2$ are replaced with $\EE \abs{\cdot}^2$. 

\ifCLASSOPTIONcaptionsoff
  \newpage
\fi



\bibliographystyle{IEEEtran}
\bibliography{convref}

\end{document}